\newtheorem{thm}{Theorem}[section]
\newtheorem{lem}[thm]{Lemma}
\newtheorem{cor}[thm]{Corollary}
\newtheorem{defn}[thm]{Definition}
\newtheorem{rem}[thm]{Remark}
\newcommand{\ip}[2]{\langle #1, #2 \rangle} 
\newcommand{\norm}[1]{\lVert #1 \rVert}
\newcommand{\supp}{\operatorname{supp}}
\newcommand{\rank}{\operatorname{rank}}
\newcommand{\e}{\mathrm{e}}
\newcommand{\iu}{\mathrm{i}}
\newcommand{\T}{\mathrm{T}}
\newcommand{\xkh}[1]{\left(#1\right)}
\newcommand{\dkh}[1]{\left\{#1\right\}}
\newcommand{\nj}[1]{\langle #1 \rangle}
\newcommand{\normf}[1]{\|{#1}\|_F}
\newcommand{\normone}[1]{\|{#1}\|_1}
\newcommand{\norms}[1]{\|{#1}\|_2}
\newcommand{\abs}[1]{\left\lvert#1\right\rvert}
\newcommand{\argmin}[1]{\mathop{\rm argmin}\limits_{#1}}
\newcommand{\A}{{\mathcal A}}
\newcommand{\E}{{\mathbb E}}
\newcommand{\PP}{{\mathbb P}}
\newcommand{\R}{{\mathbb R}}
\newcommand{\Rn}{{\mathbb R}^n}
\newcommand{\C}{{\mathbb C}}
\newcommand{\F}{{\mathbb F}}
\newcommand{\x}{{\widehat{\bm{x}}}}
\newcommand{\y}{{\tilde{\bm{y}}}}
\newcommand{\vx}{{\bm x}}
\newcommand{\vw}{{\bm w}}
\newcommand{\vy}{{ \bm{ y}}}
\newcommand{\vu}{{\bm u}}
\newcommand{\vv}{{\bm v}}
\newcommand{\vz}{{\bm z}}
\newcommand{\vb}{{\bm b}}
\newcommand{\va}{{\bm a}}
\newcommand{\vh}{{\bm h}}
\newcommand{\bh}{ \bar{\bm h}}
\newcommand{\bH}{ \bar{\bm H}}
\title{Affine phase retrieval for sparse signals via $\ell_1$ minimization}
\author{Meng huang}
\address{School of Mathematical Sciences, Beihang University, Beijing, 100191, China} \email{menghuang@buaa.edu.cn}
\thanks{Meng Huang is supported by NSFC grant (12201022).}
\author{Shixiang Sun}
\address{LSEC, ICMSEC, Academy of Mathematics and Systems Science, Chinese Academy of Sciences, Beijing 100190, China;\newline
School of Mathematical Sciences, University of Chinese Academy of Sciences, Beijing 100049, China}
\email{sunshixiang@lsec.cc.ac.cn}
\author{Zhiqiang Xu}
\address{LSEC, ICMSEC, Academy of Mathematics and Systems Science, Chinese Academy of Sciences, Beijing 100190, China;\newline
School of Mathematical Sciences, University of Chinese Academy of Sciences, Beijing 100049, China}
\email{xuzq@lsec.cc.ac.cn}
\thanks{Zhiqiang Xu is supported by the National Science Fund for Distinguished Young Scholars (12025108) and NSFC
(12021001).}
\begin{document}

\begin{abstract}
Affine phase retrieval is the problem of recovering signals from the magnitude-only measurements with a priori information.
In this paper, we use the $\ell_1$ minimization to exploit the sparsity of signals for affine phase retrieval,
showing that $O(k\log(\mathrm en/k))$ Gaussian random measurements are sufficient to recover all $k$-sparse signals by solving a natural $\ell_1$ minimization program,
where $n$ is the dimension of signals.
For the case where measurements are corrupted by noises,
the reconstruction error bounds are given for both real-valued and complex-valued signals.
Our results demonstrate that the natural $\ell_1$ minimization program for affine phase retrieval is stable.
\end{abstract}

\maketitle

\section{Introduction}

\subsection{Problem setup}

Affine phase retrieval for sparse signals aims to recover a $k$-sparse signal $\vx_0 \in \C^n$ from the observed data
\[
    y_j = |\ip{\va_j}{\bm x_0} + b_j| + w_j, \quad j=1, \ldots, m,
\]
where $\va_j \in \C^n, \, j=1, \ldots, m$ are given measurement vectors, $\vb := (b_1, \ldots, b_m)^\T \in \C^m$ is the given bias vector, and $\vw := (w_1, \ldots, w_m)^\T \in \R^m$ is the noise vector.
The affine phase retrieval arises in several practical applications,
such as holography \cite{liebling2003local,latychevskaia,barmherzig,guizar} and Fourier phase retrieval \cite{beinert2015, beinert2018, huangK2016, bendory},
where some side information of signals is a priori known before capturing the magnitude-only measurements.

The aim of this paper is to study the following program to recover $\vx_0$ from $\vy := (y_1, \ldots, y_m)^\T \in \R^m$:
\begin{equation} \label{eq:probset}
    \min_{\bm x \in \mathbb C^n} \norm{\bm x}_1 \quad
    \text{s.t.} \ \norm{|\bm A \vx + \bm b| - \bm y}_2 \le \epsilon,
\end{equation}
where $\bm A := [\va_1, \ldots, \va_m] ^\T \in \C^{m\times n}$.

Particularly, we focus on the following questions:
\begin{itemize}
\item[] {\bf Question} 1:
Assume that $\va_j, \, j=1, \ldots, m$, are Gaussian random measurements with
$m = O(k\log(\mathrm en/k))$.
In the absence of noise, i.e., $\vw=0, \, \epsilon=0$,
is the solution to \eqref{eq:probset} $\vx_0$?
\item[]{\bf Question} 2:  In the noisy scenario, is the program \eqref{eq:probset} stable under small perturbation?
\end{itemize}

For the case where $\vx_0\in \C^n$ is non-sparse,
it was shown that $m \ge 4n-1$ generic measurements are sufficient to guarantee the uniqueness of solutions in \cite{apr_2018},
and several efficient algorithms with linear convergence rate was proposed to recover the non-sparse signals $\vx_0$ from $\vy$ under $m=O(n\log n)$ Gaussian random measurements in \cite{hx_2022}.
However, for the case where $\vx_0$ is sparse, to the best of our knowledges, there is no result about it.

\subsection{Related Works}

\subsubsection{Phase retrieval}

The noisy phase retrieval is the problem of recovering a signal $\vx_0 \in \F^n$, $\F\in \{\R,\C\}$ from the magnitude-only measurements
\[
    y'_j = \abs{\nj{\va_j,\vx_0}} + w_j, \quad j=1, \ldots, m,
\]
where $\va_j \in \F^n$  are given measurement vectors and $w_j \in \R$ are noises.
It arises naturally in many areas such as X-ray crystallography \cite{prco_1990, phase_1991, ppc_1993},
coherent diffractive imaging \cite{oss_2013}, and optics \cite{fpr_1978, pra_1982, proi_2015}.
In these settings, optical detectors record only the intensity of a light wave while losing the phase information.
Note that $\abs{\nj{\va_j,\vx_0}}^2=\abs{\nj{\va_j, \mathrm e^{\mathrm i\theta}\vx_0}}^2$ for any $\theta \in \R$.
Therefore the recovery of $\vx_0$ for the classical phase retrieval is up to a global phase.
In the absence of noise, it has been proved that $m\ge 2n-1$ generic measurements suffice to guarantee the uniqueness of solutions for the real case \cite{balan2006signal},
and $m \ge 4n-4$ for the complex case \cite{saveing4d4, conca2015algebraic,wangxu}, respectively.
Moreover, several efficient algorithms have been proposed to reconstruct $\vx_0$ from $\vy':=[y'_1,\ldots,  y'_m]^\T$,
such as alternating minimization \cite{AltMin}, truncated amplitude flow \cite{TAF}, smoothed amplitude flow \cite{2020a}, trust-region \cite{turstregion},
and the Wirtinger flow (WF) variants \cite{WF, TWF,RWF}.

\subsubsection{Sparse phase retrieval}

For several applications, the underlying signal is naturally sparse or admits a sparse representation after some linear transformation.
This leads to the sparse phase retrieval:
\begin{equation} \label{eq:spr}
    \min_{\bm x \in \F^n } \quad \norm{\bm x}_0 \quad
    \text{s.t.} \ \norm{|\bm A \vx | - \bm  \vy' }_2 \le \epsilon,
\end{equation}
where $\bm A:=[\va_1, \ldots, \va_m]^\T$.
In the absence of noise, it has been established that $m = 2k$ generic measurements are necessary and sufficient for uniquely recovering of all  $k$-sparse signals in the real case,
and  $m\ge 4k-2$ are sufficient in the complex case \cite{wangxu2}.
In the noisy scenario, $O(k \log(\mathrm en/k)) $ measurements suffice for stable sparse phase retrieval \cite{eldar2014phase}.
Due to the hardness of $\ell_0$-norm in \eqref{eq:spr}, a computationally tractable approach to recover $\vx_0$ is by solving the following $\ell_1$ minimization:
\begin{equation}  \label{mo:sprl1}
    \min_{\bm x \in \F^n} \quad  \norm{\bm x}_1 \quad
    \text{s.t.} \ \norm{|\bm A \vx | - \vy'}_2 \le \epsilon.
\end{equation}
For the real case, based on the strong restricted isometry property (SRIP) established by Vladislav and Xu \cite{srip_2016},
the authors in \cite{gao2016stable} proved that,
if $\va_1,\ldots,\va_m \sim 1/\sqrt m \cdot \mathcal N (0, I_n)$ are i.i.d. Gaussian random vectors with $m\ge O( k\log(\mathrm en/k))$,  then the solution $\x\in \R^n$ to \eqref{mo:sprl1} satisfies
\[
    \min \dkh{\norm{\x - \vx_0}, \norm{\x + \vx_0}} \lesssim \epsilon+\frac{\sigma_k(\vx_0)_1}{\sqrt k},
\]
where $\sigma_k(\bm x_0)_1 := \min_{|\supp(\bm x)| \le k} \|\bm x - \bm x_0\|_1$.
Lately, this result was extended to the complex case by employing the ``phaselift'' technique in \cite{pr_2021}.
Specifically, the authors in \cite{pr_2021} showed that, for any $k$-sparse signal $\vx_0 \in \C^n$, the solution $\x \in \C^n$ to the program
\[
    \argmin{\bm x \in \mathbb C^n}  \quad \norm{\bm x}_1 \quad
    \text{s.t.} \ \norm{\mathcal{A}(\vx) - \mathcal{A}(\vx_0)}_2 \le \epsilon
\]
satisfies
\[
    \min_{\theta \in [0, 2\pi)}\|\x - \e^{\mathrm i\theta}\bm{x}_0\|_2 \lesssim \frac{\epsilon}{\sqrt{m}\|\bm x_0\|_2},
\]
provided $\va_1, \ldots, \va_m \sim \mathcal N(0,I_n)$ are i.i.d. complex Gaussian random vectors and $m\ge O( k\log(\mathrm en/k))$.
Here, $\mathcal{A}(\vx) := (|\va_1^* \vx|^2, \ldots, |\va_m^* \vx|^2)$.

\subsubsection{Affine phase retrieval}

The affine phase retrieval aims to recover a signal $\vx_0 \in\F^n$ from the measurements
\[
    y_j =\left|\ip{\va_j}{\bm x_0} + b_j\right|, \quad j=1, \ldots, m,
\]
where $\va_j \in \F^n, \, j=1, \ldots, m$ are measurement vectors, $\vb := (b_1, \ldots, b_m)^\T \in \F^m$ is the bias vector.
The problem can be regarded as the classic phase retrieval with a priori information, and is raised in many areas,
such as holographic phase retrieval \cite{microscopy_1948, microscopy_1949, liebling2003local} and Fourier phase retrieval \cite{beinert2015, beinert2018, huangK2016, bendory}.
In such scenarios, one needs to employ some additional information about the desired signals to ensure the uniqueness of solutions.
Specifically, in holographic optics, a reference signal $\bm r \in \C^k$, whose structure is a priori known, is included in the diffraction patterns alongside the signal of interest $\vx_0\in \C^n$ \cite{latychevskaia, barmherzig, guizar}.
Set $\bm x_0' = (\bm x_0^\T, \bm r^\T)^\T \in \mathbb C^{n+k}$.
Then the magnitude-only measurements we obtain that
\begin{equation*}
    y_j = |\ip{\va'_j}{\bm x_0'}| = |\ip{\va_j}{\bm x_0} + \ip{\va_j''}{\bm r}| = |\ip{\va_j}{\bm x_0} + b_j|, \quad j=1,\ldots,m,
\end{equation*}
where ${\va}'_j = (\va_j^\T, {\va_j''}^\T)^\T \in \C^{n+k}$ are given measurement vectors and $b_j = \ip{\va_j''}{\bm r} \in \C$ are known.
Therefore, the holographic phase retrieval  can be viewed as the affine phase retrieval.

Another application of affine phase retrieval arises in Fourier phase retrieval problem.
For one-dimensional Fourier phase retrieval problem, it usually does not possess the uniqueness of solutions \cite{pro_1963}.
Actually, for a given signal with dimension $n$, beside the trivial ambiguities caused by shift, conjugate reflection and rotation, there still could be $2^{n-2}$ nontrivial solutions.
To enforce the uniqueness of solutions, one approach is to use additionally known values of some entries \cite{beinert2018}, which can be recast as affine phase retrieval.
More related works on the uniqueness of solutions for Fourier phase retrieval can be seen in \cite{edidin2019, fpr_1985}.

\subsection{Our contributions}

In this paper, we focus on the recovery of sparse signals from the magnitude of affine measurements.
Specifically, we aim to recover a $k$-sparse signal $\vx_0\in \mathbb F^n$ ($\mathbb F=\R$ or $\mathbb F=\C$) from the data
\[
    \vy=|\bm{Ax}_0 + \bm b|+\vw,
\]
where $\bm A := [\va_1,\ldots,\va_m]^* \in \mathbb F^{m\times n}$ is the measurement matrix, $\vb \in \F^m$ is the bias vector, and $\vw \in \R^m$ is the noise vector.
Our aim is to present the performance of the following $\ell_1$ minimization program:
\begin{equation} \label{eq:probset22}
    \argmin{\bm x \in \mathbb F^n} \norm{\bm x}_1 \quad
    \text{s.t.} \ \norm{|\bm A \vx + \bm b| - \bm y}_2 \le \epsilon.
\end{equation}
We first introduce the following results:
\begin{thm}\label{th:impos}
Assume that there exists a matrix $\bm A \in \F^{m\times n}$, a vector $\vb \in \F^m$,  a decoder $\Delta : \F^m \rightarrow \F^n$ and positive integers $k_0, \, p, \, q$ such that
\begin{equation}\label{eq:best11}
    \|\Delta(\abs{\bm A\vx+\vb})-\vx\|_p \leq C\cdot \sigma_{k_0}{(\vx)}_q
\end{equation}
holds for all $\vx \in \F^n$ where $C := C_{k_0, p, q}$ is a constant depending on $k_0$, $p$ and $q$. Then $\vb \notin \{\bm A\vz : \vz\in \F^n\}$.
Here, $\sigma_k(\bm x)_q := \min_{|\supp(\bm z)| \le k} \|\bm z - \bm x\|_q$.
\end{thm}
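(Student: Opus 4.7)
The plan is to prove the contrapositive: assuming $\vb = \bm A \vz_0$ for some $\vz_0 \in \F^n$, I would show that no decoder $\Delta$ can satisfy the instance-optimality bound \eqref{eq:best11} for any positive integers $k_0,p,q$ and any constant $C$.

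The key observation is a symmetry of the affine measurement map when $\vb$ lies in the range of $\bm A$. For every $\vx \in \F^n$, define the companion signal $\vx' := -\vx - 2\vz_0$; then
\[
\bm A\vx' + \vb = -\bm A\vx - 2\bm A\vz_0 + \bm A\vz_0 = -(\bm A\vx + \vb),
\]
so $|\bm A\vx' + \vb| = |\bm A\vx + \vb|$ and hence $\Delta$ returns the same value on the two inputs. (In the complex case the richer symmetry $\vx \mapsto e^{\iu\theta}(\vx+\vz_0)-\vz_0$ is available, but $\theta = \pi$ already suffices.)

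Next I would feed $k_0$-sparse signals into \eqref{eq:best11}. Since $\sigma_{k_0}(\vx)_q = 0$ whenever $\vx$ is $k_0$-sparse, \eqref{eq:best11} forces $\Delta(|\bm A\vx+\vb|) = \vx$. Applying \eqref{eq:best11} at $\vx'$ and using $\Delta(|\bm A\vx'+\vb|) = \Delta(|\bm A\vx+\vb|) = \vx$ gives
\[
\|\vx - \vx'\|_p \;=\; 2\|\vx+\vz_0\|_p \;\le\; C \cdot \sigma_{k_0}(\vx')_q.
\]
To control the right-hand side, I would use $-\vx$ itself (which is $k_0$-sparse) as an admissible approximant of $\vx'$, so $\sigma_{k_0}(\vx')_q \le \|{-\vx}-\vx'\|_q = 2\|\vz_0\|_q$. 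Combining the two inequalities yields
\[
\|\vx+\vz_0\|_p \;\le\; C\|\vz_0\|_q \qquad \text{for every $k_0$-sparse } \vx \in \F^n.
\]

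The right-hand side depends only on $\vz_0$, whereas the left-hand side is unbounded as $\vx$ ranges over $k_0$-sparse vectors: scaling any nonzero $k_0$-sparse direction by a large factor forces $\|\vx+\vz_0\|_p \to \infty$; and if $\vz_0 = 0$ the inequality already collapses to $\|\vx\|_p \le 0$, failing for any nonzero $\vx$. This contradiction completes the argument. The only non-routine step is spotting the symmetry $\vx \mapsto -\vx - 2\vz_0$ that preserves the magnitude measurements; once it is identified, the rest is elementary bookkeeping with the triangle inequality, so I do not anticipate a real obstacle.
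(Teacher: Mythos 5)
Your proposal is correct and is essentially the paper's own argument: the measurement-preserving involution $\vx \mapsto -\vx - 2\vz_0$ you exploit is exactly what the paper obtains by combining $\sigma_{k_0}(-\vx)_q = \sigma_{k_0}(\vx)_q$ with the substitution $\vx = r\vx_0 + 2\vz_0$, and both arguments conclude by letting the sparse vector scale to infinity against a bound that depends only on $\vz_0$. Your packaging is slightly cleaner (bounding $\sigma_{k_0}(\vx')_q$ directly by $2\|\vz_0\|_q$ via the admissible approximant $-\vx$), but there is no substantive difference.
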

\begin{proof}
We assume that $\vb = \bm A \vz_0$ where $\vz_0 \in \F^n$. We next show that there exits $\vx \in \F^n$ such that \eqref{eq:best11} does not hold.
For the aim of contradiction, we assume that (\ref{eq:best11}) holds.
Since $\sigma_{k_0}{(-\vx)}_q=\sigma_{k_0}{(\vx)}_q$, we have
\begin{equation}\label{eq:best21}
    \|\Delta(\abs{\bm A\vx-\vb})+\vx\|_p=\|\Delta(\abs{\bm A(-\vx)+\vb})-(-\vx)\|_p\leq C \sigma_{k_0}{(\vx)}_q.
\end{equation}
Assume that $\vx_0\in \F^n$ is $k_0$-sparse, i.e. $\sigma_{k_0}(\vx_0)_q=0$. According to \eqref{eq:best11} and \eqref{eq:best21}, we obtain that
\begin{equation}\label{eq:bestmid}
    \Delta(\abs{\bm A\vx_0+\vb})\,\,=\,\, \vx_0,\quad  \Delta(\abs{\bm A\vx_0-\vb})\,\,=\,\, -\vx_0.
\end{equation}
Taking $\vx=r\vx_0+2\vz_0$ in (\ref{eq:best21}), we have
\begin{equation}\label{eq:d1}
    \|\Delta(\abs{\bm A(r\vx_0+2\vz_0)-\vb})+r\vx_0+2\vz_0\|_p\leq C \sigma_{k_0}{(r\vx_0+2\vz_0)}_q\leq C\sigma_{k_0}{(2\vz_0)}_q,
\end{equation}
where  $r>0$.
Observe that
\begin{equation}\label{eq:31}
    \Delta(\abs{\bm A(r\vx_0+2\vz_0)-\vb})=\Delta(\abs{\bm A(r\vx_0)+\vb})=r\vx_0.
\end{equation}
Here, we use $\vx_0$ is $k_0$-sparse. Substituting \eqref{eq:d1} into \eqref{eq:31}, we obtain that
\begin{equation}\label{eq:last1}
    \|2r\vx_0+2\vz_0\|_p\,\,\leq\,\, C\sigma_{k_0}(2\vz_0)_q
\end{equation}
holds for any $r>0$.
Note $\lim_{r\rightarrow \infty}\|2r\vx_0+2\vz_0\|_p=\infty$.
Hence, \eqref{eq:last1} does not hold provided $r$ is large enough. A contradiction!
\end{proof}

For the case where $m\leq n$ and $\bm A$ is full rank, we have $\vb\in \{\bm A\vz : \vz\in \F^n\}$.
According to Theorem \ref{th:impos}, one can not build the instance-optimality result (\ref{eq:best11}) under this setting.

\subsubsection{Real Case}

Our first result gives an upper bound for the reconstruct error of \eqref{eq:probset22} in the real case,
under the assumption of $\va_1, \ldots,\va_m \in \R^n$  being real Gaussian random vectors and $m\ge O(k\log(\mathrm en/k))$.
It means the $\ell_1$-minimization program is stable under small perturbation, even for the approximately $k$-sparse signals.
The result also implies that one can obtain the instance-optimality result if we add some conditions for the signal $\vx$.

\begin{thm} \label{MainThm_rn}
Let $\bm A \in \mathbb{R}^{m \times n}$ be a Gaussian random matrix with entries $a_{jk} \sim \mathcal N(0,1/ m)$.
Let $\bm b$ be a vector satisfying $\alpha < \norm{\bm b_I}_2 < \beta $ for all $I \subseteq [m]$
with $|I| \ge m/2$ where $\alpha,\, \beta>0$ are positive constants.
Assume that $m \ge C a(k+1)\log(\mathrm en/k)$ with $a(k+1) \le n$
where $a > \theta_u/\theta_l$ is a constant and $ \theta_l, \theta_u, C>0$ are positive constants depending only on $\alpha$ and $\beta$.
Then, with probability at least $1 - 4\exp(-c m)$, the following holds:
for any vector $\vx_0 \in \Rn$, the solution $\x$ to
\eqref{eq:probset22}
with $\bm y = |\bm A \bm x_0 + \bm b| + \bm w$ and $\norm{\bm w}_2 \le \epsilon$ obeys
\begin{equation*}
    \norm{\x - \bm x_0}_2 \le K_1\epsilon+ K_2 \frac{\sigma_k(\vx_0)_1}{\sqrt {a(k+1)}},
\end{equation*}
provided $K_1\epsilon + K_2 \frac{\sigma_k(\vx_0)_1}{\sqrt {a(k+1)}} <2$.
Here,
\begin{equation*}
    K_1 := \frac{2\xkh{1+1/\sqrt a}}{\sqrt{\theta_l}-\sqrt{\theta_u}/\sqrt a} >0, \quad K_2 := \sqrt{\theta_u}K_1 + 2,
\end{equation*}
and $\sigma_k(\bm x_0)_1 := \min_{|\supp(\bm x)| \le k} \|\bm x - \bm x_0\|_1$.
\end{thm}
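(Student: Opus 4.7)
Set $\vh := \x - \vx_0$. The proof combines three estimates on $\vh$: (a) an $\ell_1$ cone condition and sparsity tail bound from the optimality of $\x$; (b) the feasibility upper bound $\norm{|\bm A\x+\vb|-|\bm A\vx_0+\vb|}_2 \le 2\epsilon$ from the triangle inequality; and (c) a uniform lower bound on the same quantity in terms of $\norm{\vh_{T_0\cup T_1}}_2$ minus a tail. Matching (b) against (c) and adding the tail from (a) back produces exactly the stated bound with the constants $K_1, K_2$.

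\textbf{Steps (a) and (b).} Let $T_0\subseteq[n]$ be the support of the best $k$-term approximation to $\vx_0$. The inequality $\norm{\vx_0+\vh}_1 \le \norm{\vx_0}_1$ combined with the standard splitting argument gives the cone condition $\norm{\vh_{T_0^c}}_1 \le \norm{\vh_{T_0}}_1 + 2\sigma_k(\vx_0)_1$. Ordering $T_0^c$ by decreasing $|h_j|$ and partitioning into blocks $T_1, T_2, \ldots$ of size $ak$, the classical block inequality $\norm{\vh_{T_{j+1}}}_2 \le \norm{\vh_{T_j}}_1/\sqrt{ak}$ and Cauchy--Schwarz on $\norm{\vh_{T_0}}_1$ yield
\begin{equation*}
\sum_{j\ge 2}\norm{\vh_{T_j}}_2 \;\le\; \frac{1}{\sqrt a}\norm{\vh_{T_0}}_2 + \frac{2\sigma_k(\vx_0)_1}{\sqrt{ak}}.
\end{equation*}
Estimate (b) follows from $\norm{|\bm A\x+\vb|-\vy}_2 \le \epsilon$ and $\norm{|\bm A\vx_0+\vb|-\vy}_2 = \norm{\vw}_2 \le \epsilon$ via the triangle inequality.

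\textbf{Step (c): the main obstacle.} The technical heart of the argument is the uniform lower bound
\begin{equation*}
\norm{|\bm A(\vx_0+\vh)+\vb|-|\bm A\vx_0+\vb|}_2 \;\ge\; \sqrt{\theta_l}\,\norm{\vh_{T_0\cup T_1}}_2 \;-\; \sqrt{\theta_u}\sum_{j\ge 2}\norm{\vh_{T_j}}_2,
\end{equation*}
valid with probability at least $1-4\e^{-cm}$ uniformly over every $\vh$ in the $\ell_1$ cone with $\norm{\vh}_2<2$. A natural route is to split $[m] = S_+ \sqcup S_-$ by sign agreement of $\ip{\va_j}{\vx_0}+b_j$ and $\ip{\va_j}{\x}+b_j$: on $S_+$ the magnitude difference is exactly $|\ip{\va_j}{\vh}|$, so the left-hand side is at least $\norm{\bm A_{S_+}\vh}_2$. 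The crux is to show $|S_+| \ge m/2$: each $j\in S_-$ forces $|\ip{\va_j}{\vh}| \ge |\ip{\va_j}{\vx_0}+b_j|$, and the two-sided bias condition $\alpha < \norm{\vb_I}_2 < \beta$ for every $|I|\ge m/2$ is precisely what prevents the bias from being cancelled by $\bm A\vx_0$ on too many coordinates; a concentration-plus-net argument on the sparse cone then rules out $|S_-|\ge m/2$. Once $|S_+|\ge m/2$ is in hand, the Vladislav--Xu strong restricted isometry property for Gaussian matrices, $\theta_l\norm{\vz}_2^2 \le \norm{\bm A_I\vz}_2^2 \le \theta_u\norm{\vz}_2^2$ uniformly for all $a(k+1)$-sparse $\vz$ and $|I|\ge m/2$, converts $\norm{\bm A_{S_+}\vh}_2$ into the stated lower bound by expanding $\vh = \vh_{T_0\cup T_1}+\sum_{j\ge 2}\vh_{T_j}$ (each summand has support of size at most $a(k+1)$) and applying the lower/upper RIP estimates blockwise.

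\textbf{Assembly.} Combining (b) with (c) and substituting the tail estimate from (a) yields
\begin{equation*}
\bigl(\sqrt{\theta_l}-\sqrt{\theta_u}/\sqrt a\bigr)\norm{\vh_{T_0\cup T_1}}_2 \;\le\; 2\epsilon + \frac{2\sqrt{\theta_u}\,\sigma_k(\vx_0)_1}{\sqrt{ak}}.
\end{equation*}
This, together with $\norm{\vh}_2 \le (1+1/\sqrt a)\norm{\vh_{T_0\cup T_1}}_2 + 2\sigma_k(\vx_0)_1/\sqrt{ak}$ from Step (a), produces exactly $K_1\epsilon + K_2\,\sigma_k(\vx_0)_1/\sqrt{a(k+1)}$ with the $K_j$ of the theorem (the assumption $a>\theta_u/\theta_l$ makes the leading factor positive). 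The hypothesis that this bound is $<2$ is precisely what closes the bootstrap $\norm{\vh}_2<2$ used in Step (c); this bootstrap is the single delicate point of the argument, reflecting the fact that without the bias-driven control on sign consistency one would fall back on the sign ambiguity of ordinary (non-affine) phase retrieval and lose the lower bound.
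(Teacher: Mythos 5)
Your Step (a), Step (b), and the Case-1 half of Step (c) (lower-bounding $\norm{\bm A_{S_+}\vh}_2$ via SRIP on the sign-agreement set when $|S_+|\ge m/2$) match the paper's argument. The genuine gap is in how you dispose of the complementary case. You propose to \emph{prove} that $|S_+|\ge m/2$ by arguing that each $j\in S_-$ forces $|\ip{\va_j}{\vh}|\ge|\ip{\va_j}{\vx_0}+b_j|$ and then invoking an unspecified ``concentration-plus-net argument'' together with a bootstrap $\norm{\vh}_2<2$. This does not close: if $|S_-|\ge m/2$, summing your pointwise inequality gives $\norm{\bm A_{S_-}\vh}_2\ge\norm{\bm A_{S_-}\vx_0+\vb_{S_-}}_2\gtrsim\sqrt{\theta_l'}$, while the a priori bound $\norm{\vh}_2<2$ only yields $\norm{\bm A_{S_-}\vh}_2\lesssim 2\sqrt{\theta_u}(1+1/\sqrt a)$; since $\theta_u\ge\theta_l\ge\theta_l'$ up to the paper's $0.99/1.01$ factors, these two estimates are perfectly compatible and no contradiction results. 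Moreover the bootstrap itself is circular: $\x$ is a global minimizer of a nonconvex feasibility problem, so you may not assume $\norm{\vh}_2<2$ in order to derive $\norm{\vh}_2\le K_1\epsilon+K_2\sigma_k(\vx_0)_1/\sqrt{a(k+1)}<2$ without a separate continuity or contradiction device, which you do not supply.

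The paper avoids ever proving $|S_+|\ge m/2$. It augments to $\bm A'=[\bm A\ \ \vb]$, $\x'=(\x^\T,1)^\T$, $\vx_0'=(\vx_0^\T,1)^\T$, and establishes (Lemma \ref{lem_srip2}) that $\bm A'$ itself satisfies the strong RIP --- this is exactly where the two-sided condition $\alpha<\norm{\vb_I}_2<\beta$ enters. In the bad case $|I|<m/2$ it uses the identity $\bigl||u|-|v|\bigr|=|u+v|$ for $uv\le 0$ to get $\norm{\bm A'_{I^c}(\x'+\vx_0')}_2\le 2\epsilon$, runs the identical cone-plus-SRIP argument on the \emph{sum} vector $\vh^+=\x'+\vx_0'$, and concludes $\norm{\vh^+}_2\le K_1\epsilon+K_2\sigma_k(\vx_0)_1/\sqrt{a(k+1)}<2$; but the last coordinate of $\vh^+$ equals $2$, so $\norm{\vh^+}_2\ge 2$, a contradiction that eliminates the case outright. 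This is the one idea your proposal is missing, and it is also the only place the hypothesis $K_1\epsilon+K_2\sigma_k(\vx_0)_1/\sqrt{a(k+1)}<2$ and the bias condition are actually needed (in your Case-1 analysis the bias plays no role, which should have been a warning sign). To repair your proof, replace the attempted direct count of $S_+$ by this augmented-vector contradiction argument.
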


In the absence of noise, i.e., $\vw=0, \, \epsilon=0$, Theorem \ref{MainThm_rn} shows that if $\va_1, \ldots, \va_m \sim 1/\sqrt m \cdot \mathcal N(0, I_n)$ are real Gaussian random vectors and $m\ge O(k\log(\mathrm en/k))$,
then all the $k$-sparse signals $\vx_0 \in \Rn$ could be reconstructed exactly by solving the program \eqref{eq:probset22} under some mild conditions on $\vb$.
We state it as the following corollary:

\begin{cor} \label{MainThm_r}
Let $\bm A \in \mathbb{R}^{m \times n}$ be a Gaussian random matrix with entries $a_{jk}\sim \mathcal N(0,1/m)$,  and $\bm b \in \mathbb{R}^m$ be a vector satisfying
$\alpha  \le  \norm{\bm b_I}_2 \le \beta $ for all $I \subseteq [m]$ with $ |I|\ge m/2 $, where $\alpha$ and $\beta$ are two positive universal constants.
If $m \ge C k\log(\mathrm en/k)$, then with probability at least $1 - 4\exp(-c m)$ it holds: for any $k$-sparse signal $\bm x_0 \in \mathbb{R}^n$, the $\ell_1$ minimization
\begin{equation*}
    \argmin{\bm x \in \mathbb{R}^n} \|\bm x\|_1 \quad  \mathrm{s.t.} \quad  |\bm A \vx + \bm b| = \bm y
\end{equation*}
with $\bm y = |\bm A \vx_0 + \bm b|$ has a unique solution $\bm x_0$. Here $C,\, c>0$ are constants depending only on $\alpha$ and $\beta$.
\end{cor}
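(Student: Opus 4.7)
The plan is to deduce Corollary \ref{MainThm_r} as an immediate specialization of Theorem \ref{MainThm_rn} to the noiseless, exactly $k$-sparse setting. Since the corollary assumes $\vw = 0$, we may take $\epsilon = 0$ in program \eqref{eq:probset22}; the constraint $\||\bm A \vx + \vb| - \vy\|_2 \le 0$ then coincides with the equality constraint $|\bm A \vx + \vb| = \vy$ appearing in the corollary, so any feasible point of the corollary's program is a feasible point for the $\epsilon = 0$ instance of \eqref{eq:probset22}, and the two minimization problems share the same minimizers.

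Next I would invoke Theorem \ref{MainThm_rn} with $\epsilon = 0$. Because $\vx_0$ is exactly $k$-sparse, the best $k$-term $\ell_1$ approximation error vanishes: $\sigma_k(\vx_0)_1 = 0$. Consequently the side condition $K_1 \epsilon + K_2 \sigma_k(\vx_0)_1/\sqrt{a(k+1)} < 2$ required by Theorem \ref{MainThm_rn} reduces to $0 < 2$ and is trivially satisfied. The error estimate of Theorem \ref{MainThm_rn} then collapses to
\[
\|\x - \vx_0\|_2 \le K_1 \cdot 0 + K_2 \cdot \frac{0}{\sqrt{a(k+1)}} = 0,
\]
forcing $\x = \vx_0$. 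This yields both existence (since $\vx_0$ itself is feasible and attains the bound) and uniqueness of the minimizer.

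The sample complexity $m \ge C k\log(\mathrm en/k)$ and the probability $1 - 4\exp(-cm)$ are inherited verbatim from Theorem \ref{MainThm_rn} once we fix $a$ to be any universal constant exceeding $\theta_u/\theta_l$; the factor $a(k+1)$ is then absorbed into the constant $C$, which retains its dependence only on $\alpha$ and $\beta$. Two cosmetic discrepancies need a brief remark: the strict versus non-strict bounds on $\|\vb_I\|_2$ are reconciled by perturbing $\alpha$ and $\beta$ by an arbitrarily small amount, and the hypothesis $a(k+1) \le n$ of Theorem \ref{MainThm_rn} is automatic in the nontrivial sparse regime $k \lesssim n/a$ implicit in the corollary.

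I do not expect any genuine obstacle here. The substantive work---the RIP-type concentration for $|\bm A \vx + \vb|$ restricted to sparse vectors, the stability analysis of the $\ell_1$ program, and the control of the relevant Gaussian suprema---has already been carried out in the proof of Theorem \ref{MainThm_rn}. Corollary \ref{MainThm_r} is just the clean observation that both error-controlling quantities, the noise level $\epsilon$ and the sparsity defect $\sigma_k(\vx_0)_1$, vanish simultaneously in this regime, so the stability bound becomes exact recovery.
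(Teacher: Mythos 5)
Your proposal is correct and follows exactly the route the paper intends: the paper gives no separate proof of Corollary \ref{MainThm_r}, presenting it as the immediate specialization of Theorem \ref{MainThm_rn} to $\epsilon=0$ and $\sigma_k(\vx_0)_1=0$, which is precisely your argument. The handling of the side condition, the absorption of $a(k+1)$ into the constant $C$, and the uniqueness conclusion are all as the authors intend, so nothing further is needed.
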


\subsubsection{Complex case}

We next turn to consider the estimation performance of \eqref{eq:probset22} for the complex-valued signals.
Let $\mathbb{H}^{n \times n}$ be the set of Hermitian matrix in $\C^{n\times n}$ and $\|\bm H\|_{0, 2}$ denotes the number of non-zero rows in $\bm H$.
Given $\va_1, \ldots, \va_m \in \C^n$ and $b_1,\ldots,b_m \in \C$, we define a linear map $\mathcal A': \bm H' \in \mathbb H^{(n+1)\times (n+1)}\to \R^m$ as follows:
\begin{equation} \label{eq:limAp0}
\mathcal A'(\bm H') = (\va_1'^* \bm H' \va_1', \dots, \va_m'^* \bm H' \va_m'),
\end{equation}
where $\va_j' := \Bigl( \begin{array}{l} \va_j \\  b_j \end{array} \Bigr) \in \C^{n+1}$.

\begin{defn}
We say the linear map $\mathcal A'$ defined in \eqref{eq:limAp0} satisfies the restricted isometry property of order $(r,k)$ with constants $c,\, C >0$ if the following holds
\begin{equation}\label{eq:RIPrk}
    c \norm{\bm H'}_F \le \frac{1}{m}\norm{\mathcal A'(\bm H')}_1 \le C \norm{\bm H'}_F
\end{equation}
for all $\bm H':= \begin{bmatrix} \bm H & \vh \\ \vh^* & 0 \end{bmatrix} \in \mathbb H^{(n+1)\times (n+1)}$
with $\rank(\bm H) \le r$, $\norm{\bm H}_{0, 2} \le k$ and $\norm{\vh}_0 \le k$.
\end{defn}

The following theorem shows that the linear map $\A'$ satisfies the restricted isometry property over low-rank and sparse matrices,
provided $\va_1, \ldots, \va_m \in \C^n$ are i.i.d. complex Gaussian random vectors and $\vb := (b_1,\ldots,b_m)^\T \in \C^m$ satisfies some mild conditions.

\begin{thm} \label{lem_ripc20}
Suppose $\va_1, \ldots, \va_m \sim 1/\sqrt{2}\cdot\mathcal N(0,I_n) + \mathrm i/\sqrt{2}\cdot \mathcal N(0,I_n)$ are i.i.d. complex Gaussian random vectors
and $\vb\in \C^m$ is a independent sub-gaussian random vector (it also may be deterministic) with sub-gaussian norm $\norm{\vb}_{\psi_2} \le C$ and $\E \norm{\vb}_1 \ge c_1 m$, $\E \norm{\vb}_2 \le c_2 \sqrt m $,
where $C,\, c_1,\, c_2 > 0$ are universal constants.
If $m\ge C' k\log(\mathrm en/k)$, then with probability at least $1-5\exp(-c' m)$, the linear map $\A'$ defined in \eqref{eq:limAp0} obeys
\begin{equation*}
    \frac{\theta^-} {12} \norm{\bm H'}_F \le \frac{1}{m}\norm{\mathcal A'(\bm H')}_1 \le 3\theta^+ \norm{\bm H'}_F
\end{equation*}
for all $\bm H':= \begin{bmatrix} \bm H & \vh \\ \vh^* & 0 \end{bmatrix} \in \mathbb H^{(n+1)\times (n+1)}$ with $\rank(\bm H) \le 2$, $\norm{\bm H}_{0, 2} \le k$ and $\norm{\vh}_0 \le k$.
Here, $\theta^-:=\min(1, c_1/\sqrt 2)$, $\theta^+:=\max(\sqrt 6, c_2)$, and $C',\, c'>0$ are constants depending only on $c_1,\, c_2$.
\end{thm}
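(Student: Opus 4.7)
The plan is to combine a pointwise concentration bound with an $\epsilon$-net argument over the structured set of matrices. First I would fix a matrix $\bm H'$ of the prescribed form and decompose each summand as
\[
\va_j'^{\,*}\bm H'\va_j' \;=\; \va_j^*\bm H\va_j \;+\; 2\Real(\bar b_j \vh^*\va_j) \;=:\; X_j + Y_j.
\]
Since the complex Gaussian third moments $\E[\bar a_p a_q a_r]$ vanish, one checks $\E[X_jY_j\mid \vb]=0$, hence $\E[(X_j+Y_j)^2\mid \vb]=\normf{\bm H}^2+|\tr(\bm H)|^2+2|b_j|^2\norms{\vh}^2$, and this identity drives both the lower and upper expectation estimates.

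For the lower bound, I would apply the Carbery--Wright anti-concentration inequality (equivalently, hypercontractivity plus Cauchy--Schwarz) for degree-two Gaussian polynomials to get
\[
\E[|X_j+Y_j|\mid \vb]\;\ge\; c\sqrt{\E[(X_j+Y_j)^2\mid \vb]}\;\ge\; \tfrac{c}{2}\bigl(\normf{\bm H}+\sqrt 2\,|b_j|\norms{\vh}\bigr),
\]
where the last step uses $\sqrt{a^2+b^2}\ge (a+b)/2$. Averaging in $j$ and then in $\vb$, the hypothesis $\E\norm{\vb}_1\ge c_1 m$ converts the $|b_j|$-term into a $\norms{\vh}$-contribution and produces $\E\tfrac{1}{m}\norm{\mathcal A'(\bm H')}_1 \gtrsim \theta^-\bigl(\normf{\bm H}+\norms{\vh}\bigr)\gtrsim \theta^-\normf{\bm H'}$, using $\normf{\bm H'}^2=\normf{\bm H}^2+2\norms{\vh}^2$. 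For the upper bound, the triangle inequality gives $\E|X_j+Y_j|\le \E|X_j|+\E|Y_j|$; since $\rank(\bm H)\le 2$ one has $\E|X_j|\le \sqrt{\E X_j^2}\le \sqrt 3\,\normf{\bm H}$ (because $|\tr \bm H|^2\le 2\normf{\bm H}^2$), and $\E[|Y_j|\mid \vb]=\sqrt{2/\pi}\,\sqrt 2\,|b_j|\norms{\vh}$, after which the Cauchy--Schwarz estimate $\E\norm{\vb}_1\le \sqrt m\,\E\norms{\vb}\le c_2 m$ yields $\E\tfrac{1}{m}\norm{\mathcal A'(\bm H')}_1\lesssim \theta^+\normf{\bm H'}$.

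Next I would establish pointwise concentration. Conditional on $\vb$, each $|\va_j'^{\,*}\bm H'\va_j'|$ is sub-exponential with $\psi_1$-norm $\lesssim \normf{\bm H}+|b_j|\norms{\vh}$, so Bernstein's inequality yields $\PP\bigl(|\tfrac{1}{m}\norm{\mathcal A'(\bm H')}_1-\mu|>t\normf{\bm H'}\bigr)\le 2\exp(-cmt^2)$ on the event $\norms{\vb}\le C\sqrt m$, which itself holds with probability $1-\exp(-cm)$ by sub-Gaussian concentration of $\vb$. I would then cover the set $\mathcal S:=\{\bm H':\rank(\bm H)\le 2,\ \norm{\bm H}_{0,2}\le k,\ \norm{\vh}_0\le k,\ \normf{\bm H'}=1\}$: there are at most $\binom{n}{k}^2\le(\mathrm en/k)^{2k}$ choices of Hermitian row/column support for $\bm H$ together with the support of $\vh$, and within each support pattern the parameter space (a rank-two Hermitian $k\times k$ block plus a vector in $\C^k$) has dimension $O(k)$ and admits an $\epsilon$-net of cardinality $(C/\epsilon)^{O(k)}$. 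Choosing $\epsilon$ a small absolute constant and $m\ge C'k\log(\mathrm en/k)$, a union bound makes both one-sided bounds hold uniformly over the net.

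Passing from the net to $\mathcal S$ is then standard: the map $\bm H'\mapsto \tfrac{1}{m}\norm{\mathcal A'(\bm H')}_1$ is $1$-homogeneous and subadditive, so the uniform upper bound just established on the net absorbs the perturbation caused by approximating a generic $\bm H'\in\mathcal S$ by its nearest net element. The main obstacle I anticipate is the coupling between the random bias $\vb$ and the anti-concentration step: one must confirm the Carbery--Wright constant for $X_j+Y_j$ is uniform in $\vb$ (which follows from the fact that it is a degree-two polynomial in $\va_j$ only, with $\vb$ fixed) and simultaneously restrict to a high-probability event on which $\norm{\vb}_1/m$ and $\norms{\vb}/\sqrt m$ are within constant factors of their means, so that both Bernstein and the expectation estimates are effective after deconditioning. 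A secondary bookkeeping subtlety is to ensure the $\epsilon$-net respects the rank constraint on the upper-left block $\bm H$ rather than on $\bm H'$ itself, since the off-diagonal $\vh$-block can raise the rank of $\bm H'$ by up to two.
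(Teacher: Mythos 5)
Your proposal follows essentially the same route as the paper's proof: a pointwise expectation estimate obtained by comparing the first absolute moment of each summand $\va_j^*\bm H\va_j+2(b_j(\va_j^*\vh))_{\Re}$ with its second and fourth moments (the paper computes $\E\xi^2$ and $\E\xi^4$ explicitly in Lemma~\ref{lem_bounds} and uses $\E|\xi|\ge (\E\xi^2)^{3/2}/(\E\xi^4)^{1/2}$ rather than quoting Carbery--Wright/hypercontractivity, which is how it arrives at the explicit constants $\theta^-/12$ and $3\theta^+$), followed by Bernstein concentration, a support-plus-rank $\delta$-net, and a subadditivity/Lipschitz step to pass from the net to the full class. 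The one point where your formulation is genuinely weaker is the concentration step: after conditioning on $\vb$, the sub-exponential norm of the $j$-th summand is of order $\normf{\bm H}+|b_j|\norms{\vh}$, and the event $\norms{\vb}\le C\sqrt m$ still allows a single $|b_j|$ of size $\sqrt m$, so the second branch of Bernstein's inequality only yields a tail of order $\exp(-c\sqrt m\,t)$ rather than $\exp(-c m t^2)$; the paper sidesteps this by keeping $b_j$ random and using $\norm{\vb}_{\psi_2}\le C$ to bound every summand's sub-exponential norm by a universal constant before any conditioning, and you would need to do the same (or otherwise truncate the large entries of $\vb$) to recover the stated probability $1-5\exp(-c'm)$.
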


With abuse of notation, we denote $\mathcal A'(\vx'):=\mathcal A'(\vx'\vx'^*)$ for any vector $\vx'\in \C^{n+1}$. Then we have

\begin{thm} \label{th:stubon0}
Assume that the linear map $\mathcal A'(\cdot)$ satisfies the RIP condition (\ref{eq:RIPrk}) of order $(2, 2ak)$ with constants $c,\, C>0$.  For any $k$-sparse signal $\vx_0 \in \C^n$, if
\[
c - C\xkh{ \frac{4}{\sqrt a}+\frac 1a} >0,
\]
then the solution $\x\in \C^n$ to
\begin{equation*}
    \argmin{\vx \in \C^n} \quad \norm{\vx}_1 \quad \mbox{\rm s.t.} \quad \norm{\mathcal A'(\vx')-\y}\le \epsilon \quad \mbox{and} \quad \vx'=(\vx^\T, 1)^\T
\end{equation*}
with $\y=\mathcal A'(\vx'_0)+\vw$, $\norm{\vw} \le \epsilon$ and $\vx'_0=(\vx_0^\T, 1)^\T$ obeys
\[
    \min_{\theta\in \R} \xkh{ \norms{\x-\mathrm{e}^{\mathrm i\theta}\vx_0} + |1-\mathrm{e}^{\mathrm i\theta} |} \le \frac{C_0 \epsilon}{\xkh{\norm{\vx_0}+1}\sqrt m},
\]
where
\[
    C_0:=2\sqrt 2 \cdot \frac {\frac1a +\frac{4}{\sqrt a}+1}{c - C\xkh{ \frac{4}{\sqrt a}+\frac 1a}}.
\]
\end{thm}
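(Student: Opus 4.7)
The plan is to lift the problem to matrix recovery and run an RIP-based $\ell_1$ analysis in the spirit of classical compressed sensing. Set $\vh:=\x-\vx_0$, $\x':=(\x^\T,1)^\T$, $\vx'_0:=(\vx_0^\T,1)^\T$, and $\vh':=\x'-\vx'_0=(\vh^\T,0)^\T$, and consider the Hermitian rank-$\le 2$ matrix
\[
\bm H' := \x'(\x')^* - \vx'_0(\vx'_0)^* \;=\; \vx'_0(\vh')^* + \vh'(\vx'_0)^* + \vh'(\vh')^*.
\]
The feasibility of $\x$ together with $\norms{\vw}\le\epsilon$ yields $\norms{\A'(\bm H')}\le 2\epsilon$, hence $\normone{\A'(\bm H')}\le 2\sqrt m\,\epsilon$ by Cauchy--Schwarz. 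The $\ell_1$-minimality $\norm{\x}_1\le \norm{\vx_0}_1$ produces the classical cone condition $\norm{\vh_{S_0^c}}_1\le \norm{\vh_{S_0}}_1\le \sqrt k\,\norms{\vh}$, with $S_0:=\supp(\vx_0)$.

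I would then set up the familiar block decomposition: choose $T_0\supseteq S_0$ with $|T_0|=k$, sort the entries of $\vh_{T_0^c}$ by magnitude, and form disjoint blocks $T_1,T_2,\ldots$ of size $ak$, yielding the tail bound $\sum_{j\ge 2}\norms{\vh_{T_j}}\le \norms{\vh}/\sqrt a$. Put $T:=T_0\cup T_1$ (so $|T|\le(a+1)k\le 2ak$) and take as the ``leading RIP-compliant'' piece
\[
\bm V := \x'_T(\x'_T)^* - \vx'_0(\vx'_0)^* \;=\; \vx'_0(\vh'_T)^* + \vh'_T(\vx'_0)^* + \vh'_T(\vh'_T)^*,
\]
where $\x'_T:=(\x_T^\T,1)^\T$. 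Then $\bm V$ is Hermitian, rank $\le 2$, row-supported on $T\cup\{n+1\}$, and has zero $(n+1,n+1)$-entry, so it falls under the RIP hypothesis of order $(2,2ak)$. The remainder $\bm H'-\bm V$ decomposes into Hermitian rank-$\le 2$ pieces of three types --- linear cross-pieces $\vx'_0(\vh'_{T_j})^*+\vh'_{T_j}(\vx'_0)^*$ for $j\ge 2$, quadratic cross-pieces $\vh'_{T_\ell}(\vh'_{T_j})^*+\vh'_{T_j}(\vh'_{T_\ell})^*$ for $\ell\in\{0,1\}$, $j\ge 2$, and higher-order Hermitian pairs/diagonals $\vh'_{T_i}(\vh'_{T_j})^*+\vh'_{T_j}(\vh'_{T_i})^*$ for $i,j\ge 2$ --- each having row support of size $\le 2ak$ and the corner entry zero. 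Using orthogonality of the disjoint supports to collapse each Frobenius norm to a product $\norms{\vh_{T_i}}\norms{\vh_{T_j}}$, one obtains a total piece-mass of the form $\lesssim \norms{\vx'_0}\norms{\vh}/\sqrt a + \norms{\vh}^2/\sqrt a + \norms{\vh}^2/a$, whose constants combine to yield the $4/\sqrt a + 1/a$ pattern in $C_0$.

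The RIP lower bound on $\bm V$ plus the upper bound on each piece of $\bm H'-\bm V$ combine with $\normone{\A'(\bm H')}\le 2\sqrt m\,\epsilon$ to give $cm\,\normf{\bm V}\le \normone{\A'(\bm V)}\le 2\sqrt m\,\epsilon + Cm\sum\normf{\text{piece}}$, whence by the triangle inequality
\[
\normf{\bm H'}\;\le\;\tfrac{2\epsilon}{c\sqrt m}+\Bigl(1+\tfrac{C}{c}\Bigr)\sum\normf{\text{piece}}.
\]
The key step is to close the loop. From the expansion of $\bm H'$ one has the elementary lower bound $\normf{\bm H'}\ge \sqrt 2\,\norms{\vx'_0}\norms{\vh}-\norms{\vh}^2$, so in the regime $\norms{\vh}\lesssim\norms{\vx'_0}$ one obtains $\norms{\vh}\lesssim \normf{\bm H'}/\norms{\vx'_0}$. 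Substituting this into the piece sum exactly cancels the $\norms{\vx'_0}$ factor appearing in the dominant linear tail term, producing a self-referential inequality in $\normf{\bm H'}$ alone that, under the hypothesis $c-C(4/\sqrt a+1/a)>0$, rearranges to $\normf{\bm H'}\lesssim \epsilon/\sqrt m$ with the constant matching the numerator $1/a + 4/\sqrt a + 1$ of $C_0$.

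Finally, the conversion to the stated bound uses the elementary matrix-to-vector inequality $\normf{uu^*-vv^*}\ge \min(\norms{u},\norms{v})\cdot\min_{|c|=1}\norms{u-cv}$ together with a short case analysis on whether $\norms{\x'}\ge \norms{\vx'_0}/2$ (in the complementary case one uses instead the operator-norm bound $\normf{\bm H'}\ge \norms{\vx'_0}^2-\norms{\x'}^2\gtrsim\norms{\vx'_0}^2$ to already control the phase-adjusted error). This yields $\min_{|c|=1}\norms{\x'-c\vx'_0}\lesssim \normf{\bm H'}/\norms{\vx'_0}$; combining with the identity $\norms{\x-\e^{\iu\theta}\vx_0}+|1-\e^{\iu\theta}|\le \sqrt 2\,\norms{\x'-\e^{\iu\theta}\vx'_0}$ and with $\norms{\vx'_0}\ge (\norms{\vx_0}+1)/\sqrt 2$ produces the theorem, the overall $2\sqrt 2$ factor in $C_0$ accounting precisely for these two conversions. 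The main obstacle I foresee is the careful bookkeeping in closing the self-referential inequality: one must invoke the linear lower bound $\normf{\bm H'}\gtrsim \norms{\vx'_0}\norms{\vh}$ at exactly the right moment to cancel the $\norms{\vx'_0}$ factor in the dominant tail term, and the hypothesis $c-C(4/\sqrt a+1/a)>0$ is precisely the positivity condition needed for the resulting inequality in $\normf{\bm H'}$ to solve.
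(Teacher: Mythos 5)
Your overall architecture coincides with the paper's: the same lift to $\bm H'=\x'\x'^*-\vx_0'\vx_0'^*$, the same block decomposition $T_0,T_1,\dots$ into pieces of size $ak$, an RIP lower bound on the leading restricted matrix (your $\bm V$ is exactly the paper's $\bH'=\bm H'_{T'_{01},T'_{01}}$), and the same final conversion via $\normf{\vu\vu^*-\vv\vv^*}\gtrsim\norms{\vu}\norms{\vu-\vv}$. The genuine gap is in how you close the loop. You bound the tail pieces by quantities of the form $\norms{\vx_0'}\norms{\vh}/\sqrt a+\norms{\vh}^2/\sqrt a+\norms{\vh}^2/a$ and then try to convert back to $\normf{\bm H'}$ via $\normf{\bm H'}\ge\sqrt2\norms{\vx_0'}\norms{\vh}-\norms{\vh}^2$. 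That lower bound is vacuous as soon as $\norms{\vh}\ge\sqrt2\norms{\vx_0'}$, and the quadratic terms $\norms{\vh}^2$ are simply not $O(\normf{\bm H'})$ in general: with no smallness restriction on $\norms{\vh}$ the only a priori control from $\ell_1$-minimality is $\norms{\vh}\le\normone{\x}+\norms{\vx_0}\le(\sqrt k+1)\norms{\vx_0}$, which costs a factor $\sqrt k$ and destroys the dimension-free constant. Your fallback for the bad regime, $\normf{\bm H'}\ge\norms{\vx_0'}^2-\norms{\x'}^2$, can vanish even when the recovery error is of order one (take $\x$ close to $-\vx_0$ on the first $n$ coordinates, so that $\norms{\x'}=\norms{\vx_0'}$), so it does not rescue the argument; and the theorem's exact condition $c-C(4/\sqrt a+1/a)>0$ and constant $C_0$ would not emerge from this route even where it works.

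The missing idea is to anchor every tail estimate to $\normf{\bH'}$ itself rather than to $\norms{\vx_0'}\norms{\vh}$ and $\norms{\vh}^2$. Since $\vx_0$ is supported in $T_0$, the cross blocks are $\bm H'_{T'_0,T_j}=\x'_{T'_0}\x_{T_j}^*$, so their total Frobenius mass is $\norms{\x'_{T'_0}}\sum_{j\ge2}\norms{\x_{T_j}}\le\frac{1}{\sqrt a}\norms{\x'_{T'_{01}}}\,\norms{\x'_{T'_{01}}-\vx_0'}$, and Lemma \ref{le:uuvv} applied with $\vu=\x'_{T'_{01}}$ --- the \emph{perturbed} vector, after the phase normalization $\nj{\x',\vx_0'}\ge0$ --- gives $\norms{\x'_{T'_{01}}}\norms{\x'_{T'_{01}}-\vx_0'}\le\sqrt2\,\normf{\bH'}$ unconditionally, with no case analysis on the size of the error. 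The quadratic tail is handled by an entrywise $\ell_1$ comparison, $\bigl(\sum_{j\ge2}\norms{\x_{T_j}}\bigr)^2\le\frac{1}{ak}\normone{\x_{T_0^c}}^2=\frac{1}{ak}\normone{\bm H_{T_0^c,T_0^c}}\le\frac{1}{ak}\normone{\bm H_{T_0,T_0}}\le\frac1a\normf{\bH'}$, which invokes $\normone{\x}\le\normone{\vx_0}$ a second time. Together these give $\normf{\bm H'-\bH'}\le\xkh{\frac1a+\frac{4}{\sqrt a}}\normf{\bH'}$ and the matching bound $\frac1m\normone{\mathcal A'(\bm H'-\bH')}\le C\xkh{\frac{4}{\sqrt a}+\frac1a}\normf{\bH'}$ directly, from which the stated $C_0$ follows. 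You should replace your self-referential step with this anchoring; the rest of your outline then goes through.
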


Based on Theorem \ref{lem_ripc20}, if $\va_1,\ldots,\va_m \in \C^n$ are i.i.d. complex Gaussian random vectors and $m\ge C' ak\log(\mathrm en/ak)$,
then with high probability the linear map $\mathcal A'$ defined in \eqref{eq:limAp0} satisfies RIP conditions of order $(2,2ak)$ with constants $c=\theta^-/12$ and $C=3\theta^+$ under some mild conditions on $\vb$.
For the noiseless case where $\vw=0,\,\epsilon=0$, taking the constant $a> (8C/c)^2$ and combining with Theorem \ref{th:stubon0}, we can obtain the following result.
\begin{cor}
Suppose $\va_1, \ldots, \va_m \sim 1/\sqrt{2}\cdot \mathcal N(0,I_n)+\mathrm{i}/\sqrt{2}\cdot\mathcal N(0,I_n)$ are i.i.d. complex Gaussian random vectors and $\vb\in \C^m$ is a independent sub-gaussian random vector (it also may be deterministic)
with sub-gaussian norm $\norm{\vb}_{\psi_2} \le C$ and  $ \E \norm{\vb}_1 \ge c_1 m$, $\E \norm{\vb}_2 \le c_2 \sqrt m$, where $C,\, c_1,\, c_2 > 0$ are universal constants.
If $m \ge C'' k\log(\mathrm en/k)$, then with probability at least $1-5\exp(-c'' m)$, then the solution to
\[
    \argmin{\vx \in \C^n} \quad \norm{\vx}_1 \quad \mbox{s.t.} \quad |\bm A\vx+\vb|=|\bm A\vx_0+\vb|
\]
is $\vx_0$ exactly. Here, $C'',\, c'' > 0$ are constants depending only on $c_1,\,c_2$.
\end{cor}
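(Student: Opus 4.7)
The proof I envision is essentially a straightforward composition of Theorem \ref{lem_ripc20} and Theorem \ref{th:stubon0}, specialized to the noiseless regime. I would break it into three short steps.

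First, I would apply Theorem \ref{lem_ripc20} with sparsity parameter $2ak$ in place of $k$. Since the hypotheses on $\va_j$ and $\vb$ are identical to those in Theorem \ref{lem_ripc20}, provided $m \ge C'(2ak)\log(\mathrm en/(2ak))$, with probability at least $1-5\exp(-c'm)$ the linear map $\mathcal A'$ satisfies the RIP of order $(2,2ak)$ with constants $c = \theta^-/12$ and $C = 3\theta^+$. Absorbing $a$ into the constants and using $\log(\mathrm en/(2ak)) \le \log(\mathrm en/k)$, this becomes the single condition $m \ge C'' k\log(\mathrm en/k)$ appearing in the statement, for a suitably enlarged $C''$.

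Next, I would choose $a$ to meet the structural hypothesis of Theorem \ref{th:stubon0}. With $c = \theta^-/12$ and $C = 3\theta^+$ fixed, take $a > (8C/c)^2$; then
\[
    \frac{4C}{\sqrt a} < \frac{c}{2}, \qquad \frac{C}{a} < \frac{c}{2},
\]
so $c - C(4/\sqrt a + 1/a) > 0$, which is exactly the quantitative RIP requirement in Theorem \ref{th:stubon0}. Note that $a$ depends only on $c_1,c_2$ (via $\theta^\pm$), so it can indeed be absorbed into the universal constants $C'',c''$.

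Finally, I would translate the noiseless constraint to the framework of Theorem \ref{th:stubon0}. Setting $\vx'_0 = (\vx_0^\T,1)^\T$ and $\vx' = (\vx^\T,1)^\T$, the constraint $|\bm A\vx + \vb| = |\bm A\vx_0 + \vb|$ is componentwise $|\va_j'^*\vx'|^2 = |\va_j'^*\vx'_0|^2$, i.e.\ $\mathcal A'(\vx') = \mathcal A'(\vx'_0)$. Thus we are exactly in the setting of Theorem \ref{th:stubon0} with $\vw = 0$ and $\epsilon = 0$. The theorem then yields
\[
    \min_{\theta \in \R}\xkh{\norms{\x - \mathrm{e}^{\mathrm i\theta}\vx_0} + |1 - \mathrm{e}^{\mathrm i\theta}|} \le 0.
\]
The function being minimized is continuous and $2\pi$-periodic in $\theta$, so the minimum is attained. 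Zero attainment forces both terms to vanish simultaneously: $|1-\mathrm{e}^{\mathrm i\theta}|=0$ implies $\theta \in 2\pi\Z$, and then $\x = \vx_0$, giving exact recovery.

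I do not anticipate a substantive obstacle here: the corollary is essentially a reading of Theorem \ref{th:stubon0} with the RIP verified by Theorem \ref{lem_ripc20}. The only mildly delicate point is the bookkeeping needed to collapse the three conditions (the RIP order $2ak$, the quantitative lower bound on $a$, and the sample complexity) into a single clean statement with constants $C'',c''$ depending only on $c_1,c_2$; this is routine once $a$ is fixed as above.
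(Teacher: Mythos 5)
Your proposal is correct and follows exactly the route the paper takes: invoke Theorem \ref{lem_ripc20} to get the RIP of order $(2,2ak)$, choose $a>(8C/c)^2$ so that $c - C(4/\sqrt a + 1/a)>0$, and then read off Theorem \ref{th:stubon0} with $\epsilon=0$, where the vanishing of $\min_\theta(\norms{\x-\mathrm{e}^{\mathrm i\theta}\vx_0}+|1-\mathrm{e}^{\mathrm i\theta}|)$ forces $\theta\in 2\pi\Z$ and hence $\x=\vx_0$. Your write-up supplies the same bookkeeping the paper leaves implicit, so there is nothing further to add.
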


\begin{rem}
We give an upper bound for $\min_{\theta\in\R} \xkh{\norms{\x-\e^{\mathrm{i}\theta}\vx_0} + |1-\e^{\mathrm i\theta}|}$ in Theorem \ref{th:stubon0}.
However, since the affine phase retrieval can recover a signal exactly (not just up to a global phase), therefore, one may wonder: is there a stable recovery bound for $\norms{\x-\vx_0}$? 
We truly believe that the answer is no, especially for the case where the noise vector $\norms{\vw} \gtrsim \sqrt m$. We defer the proof of it for the future work.
\end{rem}

\subsection{Notations}
Throughout the paper, we denote $\vx\sim \mathcal N(0,I_n)$ if $\vx\in \R^n$ is a standard Gaussian random vector.
A  vector $\bm x$ is $k$-sparse if there are at most $k$ nonzero entries of $\vx$. For simplicity, we denote $[m] := \{1, \dots, m\}$.
For any subset $I \subseteq [m]$, let  $\bm A_I = \begin{bmatrix} \va_j: j \in I \end{bmatrix}^*$ be the submatrix whose rows are generated by $\bm A = \begin{bmatrix} \va_1,\ldots, \va_m \end{bmatrix}^*$.
Denote $\sigma_k(\bm x_0)_p := \min_{|\supp(\bm x)| \le k} \|\bm x - \bm x_0\|_p$ as the best $k$-term approximation error of $\bm x_0$ with respect to $\ell_p$ norm.
For a complex number $b$,  we use $b_{\Re}$ and $b_{\Im}$ to denote the real and imaginary part of $b$, respectively.  For any $A,B\in \R$, we use $ A \lesssim B$
to denote $A\le C_0 B$ where $C_0\in \R_+$ is an  absolute constant.  The notion
$\gtrsim$ can be defined similarly.  Throughout  this paper, $c$, $C$ and the subscript (superscript) forms of them denote constants whose values vary with the context.


\section{Proof of Theorem \ref{MainThm_rn}}

In this section, we consider the estimation performance of the $\ell_1$-minimization program \eqref{eq:probset22} for the real-valued signals.
To begin with, we need the following definition of strong RIP condition, which was introduced by Vladislav and Xu \cite{srip_2016}.
\begin{defn}[Strong RIP in \cite{srip_2016}] \label{eq:strongrip}
The matrix $\bm A\in \R^{m\times n}$ satisfies the Strong Restricted Isometry Property (SRIP) of order $k$ and constants $\theta_l,\, \theta_u >0 $ if the following inequality
\begin{equation*}
    \theta_{l}\norm{\vx}^2 \le \min_{I\subset [m],\abs{I}\ge m/2} \norm{\bm A_I \vx}^2\le \max_{I\subset [m],\abs{I}\ge m/2} \norm{\bm A_I \vx}^2 \le \theta_{u}\norm{\vx}^2
\end{equation*}
holds for all $k$-sparse signals $\vx\in \R^n$. Here, $\bm A_I$ denotes the sub-matrix of $\bm A$ whose rows with indices in $I$ are kept, $[m]:=\{1,\ldots,m\}$ and $\abs{I}$ denotes the cardinality of $I$.
\end{defn}

It has been shown that if $\bm A \in \R^{m\times n}$ is a real Gaussian random matrix with entries $a_{k,j}\sim \mathcal N(0,1/m)$, then $\bm A$ satisfies the strong RIP with high probability, as stated below.

\begin{lem} \textnormal{(Theorem 2.1 in \cite{srip_2016})} \label{lem_srip}
Suppose that $t>1$ and that $\bm A\in \R^{m\times n}$ is a Gaussian random matrix with entries $a_{k,j}\sim \mathcal N(0,1/m)$.  Let $m=O(tk \log(\e n/k))$ where $k \in [1,d]\cap {\mathbb Z}$ and $t\geq 1$ is a constant.
Then there exist constants $\theta_l, \theta_u$ with $0<\theta_l< \theta_u<2$, independent with $t$,
such that $\bm A$ satisfies SRIP of order $t\cdot k$ and constants $\theta_l,\, \theta_u$ with probability at least $1-\exp(-cm)$, where $c>0$ is a universal constant.
\end{lem}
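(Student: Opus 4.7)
The plan is to handle the two SRIP inequalities separately. The upper bound is immediate: since $\max_{|I|\ge m/2}\|\bm A_I\vx\|^2=\|\bm A\vx\|^2$, attained at $I=[m]$, it reduces to the standard Gaussian RIP upper bound $\|\bm A\vx\|^2\le\theta_u\|\vx\|^2$ for $tk$-sparse $\vx$, obtained from $\chi^2$ concentration plus an $\varepsilon$-net on the $tk$-sparse unit sphere, and yielding an absolute constant $\theta_u<2$ with failure probability $\mathrm e^{-cm}$ provided $m\gtrsim tk\log(\mathrm en/k)$.

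The lower bound is the real work. Rewriting it as $\min_{|I|=m/2}\sum_{j\in I}(\va_j^\T\vx)^2\ge\theta_l$ for every $tk$-sparse unit $\vx$, I would follow a three-step scheme. (i) \emph{Pointwise lower tail.} Fix $\vx$ with $\|\vx\|=1$ and a subset $I\subseteq[m]$ with $|I|=m/2$. The variables $Z_j:=m(\va_j^\T\vx)^2$ are i.i.d.\ $\chi^2_1$, so $\sum_{j\in I}Z_j\sim\chi^2_{m/2}$; the MGF $\E\mathrm e^{-\lambda\chi^2_1}=(1+2\lambda)^{-1/2}$ and optimization in $\lambda>0$ yield
\[
\Pr\Bigl[\sum_{j\in I}(\va_j^\T\vx)^2\le\theta_l\Bigr]\le\exp\bigl(-m\psi(\theta_l)\bigr),
\]
where the Chernoff rate $\psi(\theta_l)=\theta_l/2-1/4-(1/4)\log(2\theta_l)$ satisfies $\psi(\theta_l)\to\infty$ as $\theta_l\downarrow 0$. (ii) \emph{Subset union bound.} Since there are at most $\binom{m}{m/2}\le 2^m$ such $I$, one chooses $\theta_l>0$ an absolute constant with $\psi(\theta_l)>\log 2+\eta$ for some $\eta>0$; the resulting failure probability at each fixed $\vx$ is $\mathrm e^{-\eta m}$. (iii) \emph{Net on the $tk$-sparse sphere.} Cover the $tk$-sparse unit sphere by an $\varepsilon$-net of cardinality at most $(Cen/(k\varepsilon))^{tk}$ and union-bound over it; the Chernoff budget $\mathrm e^{-\eta m}$ absorbs this count precisely when $m\ge Ctk\log(\mathrm en/k)$. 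Extend from the net to the full sphere by a standard bootstrap: for an arbitrary $tk$-sparse unit $\vx$ and nearest net point $\x$, write $\|\bm A_I\vx\|\ge\|\bm A_I\x\|-\|\bm A_I(\vx-\x)\|$ and absorb the second term using the \emph{already established} uniform upper bound $\|\bm A_I(\vx-\x)\|\le\sqrt{\theta_u}\varepsilon$ (valid because $\vx-\x$ is $2tk$-sparse, so run the upper-bound argument at order $2tk$). Choosing $\varepsilon$ small rescales $\theta_l$ by a harmless factor.

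\textbf{Main obstacle.} The crux is the $2^m$ union bound over size-$m/2$ subsets, which is atypical for RIP-style arguments: it forces $\theta_l$ to be a specific small absolute constant, chosen so that the Chernoff rate of $\chi^2_1$ in its lower tail strictly exceeds $\log 2$. A quick numerical check ($\theta_l\approx 0.01$ gives $\psi>\log 2$) confirms that such a $\theta_l$ exists. This subset-union requirement is precisely what distinguishes SRIP from ordinary RIP and explains why $\theta_l,\theta_u$ come out as explicit absolute constants independent of $t$, as the lemma asserts; it is also the source of the condition $a>\theta_u/\theta_l$ used later in Theorem~\ref{MainThm_rn}.
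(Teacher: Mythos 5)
The paper does not prove this lemma itself; it is imported verbatim as Theorem~2.1 of the cited reference \cite{srip_2016}, so there is no in-paper proof to compare against. Your reconstruction is correct and follows essentially the same scheme as that reference: the upper SRIP bound collapses to ordinary Gaussian RIP since the maximum over $I$ is attained at $I=[m]$, while the lower bound rests on the $\chi^2_{m/2}$ lower-tail Chernoff estimate, a $2^m$ union bound over subsets (which is exactly what forces $\theta_l$ to be the small absolute constant with $\psi(\theta_l)>\log 2$, independent of $t$), and a net over the $tk$-sparse sphere absorbed by $m\gtrsim tk\log(\mathrm en/k)$; the bookkeeping in your three-way union bound and the bootstrap at order $2tk$ both check out.
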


The following lemma plays a key role in the proof of Theorem \ref{MainThm_rn},
which indicates the matrix $\begin{bmatrix} \bm A & \bm b \end{bmatrix} \in \mathbb R^{m \times (n+1)}$ satisfies strong RIP with high probability
under some mild conditions on $\bm A\in \R^{m\times n}$ and $\vb\in \R^m$.

\begin{lem} \label{lem_srip2}
Let $\bm A \in \mathbb R^{m \times n}$ be a Gaussian random matrix with entries $a_{k, j}\sim \mathcal N(0,1/m)$.  Suppose that the vector $\bm b \in \mathbb R^m$ satisfies $\alpha < \norm{\bm b_I}_2 < \beta $ for all $I \subseteq [m]$
with $|I| \ge m/2$, where $\alpha$ and $\beta$ are two positive constants.  Set $\bm A' := \begin{bmatrix} \bm A & \bm b \end{bmatrix} \in \mathbb R^{m \times (n+1)}$.
If $m \ge Ct(k+1)\log(\mathrm en/k)$ with $t(k+1) \le n$ and $1 < t \in \mathbb Z$,
then the matrix $\bm A'$ satisfies the strong RIP of order $tk+1$ and constants $\theta_l',\, \theta_u'$
with probability at least $1 - 4\exp(-c'm)$, where  $C,\, c'>0$ are constants depending only on $\alpha$ and $\beta$.
Here, $\theta'_l =  0.99\min\{\theta_l, \alpha^2\}$ and $\theta'_u = 1.01 \max\{\theta_u, \beta^2\} $ with $\theta_l, \theta_u$ being defined in Lemma \ref{lem_srip}.
\end{lem}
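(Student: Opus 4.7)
My approach reduces the statement to Lemma~\ref{lem_srip} applied to the Gaussian block $\bm A$, together with a concentration estimate that controls the coupling introduced by the deterministic column $\bm b$.

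First, I would invoke Lemma~\ref{lem_srip} on $\bm A$ with the sparsity parameter $t(k+1)$ in place of $tk$. Under the hypothesis $m\ge Ct(k+1)\log(\mathrm e n/k)$, this yields with probability at least $1-\exp(-cm)$ that $\bm A$ satisfies SRIP of order $t(k+1)\ge tk+1$ with constants $\theta_l,\theta_u$. This already disposes of the sub-case $\gamma=0$ in a $(tk+1)$-sparse unit vector $\vx'=(\vx,\gamma)^\T\in\R^{n+1}$, since then $\bm A'_I\vx'=\bm A_I\vx$ with $\vx$ itself $(tk+1)$-sparse. In the remaining sub-case $\gamma\ne 0$ (so $\vx$ is $tk$-sparse), I would use the expansion
\[
    \norms{\bm A'_I\vx'}^2 \,=\, \norms{\bm A_I\vx}^2 \,+\, 2\gamma\ip{\bm A_I\vx}{\bm b_I} \,+\, \gamma^2\norms{\bm b_I}^2.
\]
SRIP of $\bm A$ controls the first summand in $[\theta_l\norms{\vx}^2,\theta_u\norms{\vx}^2]$, and the hypothesis $\alpha<\norms{\bm b_I}<\beta$ controls the third in $[\alpha^2\gamma^2,\beta^2\gamma^2]$. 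A short AM--GM calculation then shows that if the cross term satisfies $|2\gamma\ip{\bm A_I\vx}{\bm b_I}|\le\eta(\norms{\bm A_I\vx}^2+\gamma^2\norms{\bm b_I}^2)$ uniformly over $I$ and sparse $\vx$ with $\eta=0.01$, the bound
\[
    (1-\eta)(\theta_l\norms{\vx}^2+\alpha^2\gamma^2) \,\le\, \norms{\bm A'_I\vx'}^2 \,\le\, (1+\eta)(\theta_u\norms{\vx}^2+\beta^2\gamma^2)
\]
follows; combined with $\norms{\vx'}^2=\norms{\vx}^2+\gamma^2$ this delivers the claimed $\theta_l'=0.99\min(\theta_l,\alpha^2)$ and $\theta_u'=1.01\max(\theta_u,\beta^2)$.

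Second, to produce the cross-term bound I would rewrite $\ip{\bm A_I\vx}{\bm b_I}=\ip{\bm A^\T(\1_I\odot\bm b)}{\vx}$, so that for each fixed $I$ the random vector $\bm A^\T(\1_I\odot\bm b)$ is $\mathcal N(0,\norms{\bm b_I}^2/m\cdot I_n)$. For a fixed $(tk)$-sparse unit $\vx$ the inner product is therefore a scalar Gaussian with variance $\norms{\bm b_I}^2/m$. Standard Gaussian concentration plus an $\epsilon$-net on the $(tk)$-sparse unit sphere (entropy $\lesssim tk\log(\mathrm e n/tk)$) then controls the supremum over $\vx$ for a single $I$; this cost is easily absorbed by the hypothesis on $m$.

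The principal obstacle I anticipate is handling the union over the up to $2^m$ subsets $I$ with $|I|\ge m/2$: a naive union bound loses a $\sqrt{\log 2}$ factor in the Gaussian tail and prevents $\eta$ from being taken as small as $0.01$. To circumvent this, I would adopt the counting strategy used to prove Lemma~\ref{lem_srip} itself in \cite{srip_2016}. Rather than union-bounding over $I$, one fixes $\vx'=(\vx,\gamma)^\T$, sets $v_i:=(\bm A'\vx')_i$, and uses a Gaussian density estimate to bound $\Prob(v_i^2\le c)$ for each $i$, then applies a Chernoff bound to show that $\#\{i:v_i^2\le c\}\le m/4$ with probability at least $1-\exp(-c''m)$. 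The sum of the smallest $m/2$ squared coordinates is then automatically lower-bounded, which handles $\min_{|I|\ge m/2}$ without any enumeration; a symmetric argument controls the upper tail. The remaining union is only over the $\epsilon$-net on $(tk+1)$-sparse unit vectors (entropy $\lesssim t(k+1)\log(\mathrm e n/k)$), which is affordable. Collecting the failure probabilities of SRIP for $\bm A$, the two counting events, and the discretization slack then yields the claimed total failure probability $4\exp(-c'm)$.
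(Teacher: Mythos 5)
Your overall architecture coincides with the paper's: the same case split on the last coordinate $\gamma$, the same expansion of $\norms{\bm A'_I\vx'}^2$ into the SRIP term, the cross term, and $\gamma^2\norms{\bm b_I}^2$, and the same absorption of the cross term into the two diagonal terms to get $0.99$ and $1.01$. The paper handles the cross term by invoking Lemma \ref{lem_bjajx} (Gaussian concentration of $\sum_j b_j\va_j^\T\vx$ plus a net over the sparse sphere) with $\zeta=\min(\theta_l,\alpha^2)/(200\beta)$, and then asserts the resulting bound simultaneously for all $I\subseteq[m]$; it does not carry out the union over $I$ that you single out as the delicate point. So up to that point you have reproduced the paper's argument, and your instinct that uniformity over $I$ is the real issue is sound --- indeed it is more serious than a lost $\sqrt{\log 2}$ factor: taking $I=\{i: b_i\va_i^\T\vx>0\}$ makes $\ip{\bm A_I\vx}{\bm b_I}=\sum_i(b_i\va_i^\T\vx)_+$ concentrate near $\norms{\vx}\norm{\vb}_1/\sqrt{2\pi m}$, which is of order $\norms{\vx}\norms{\vb}$ rather than a small multiple of it, so no concentration argument can make the cross term uniformly small over all admissible $I$ and the smallness must come from elsewhere.

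Your proposed repair --- running the Voroninski--Xu counting argument of \cite{srip_2016} directly on the coordinates $v_i=(\bm A'\vx')_i$ --- is a genuinely different route from the paper's and is the right kind of idea, but as sketched it has a gap. The coordinate $v_i=\va_i^\T\vx+\gamma b_i$ has variance $\norms{\vx}^2/m$, so your density estimate gives $\Prob(v_i^2\le c)\lesssim \sqrt{cm}/\norms{\vx}$; to conclude that few coordinates fall below the threshold $c\asymp\theta_l'/m$ needed for the sum of the smallest $m/2$ squares to reach $\theta_l'\norms{\vx'}^2$, you need $\theta_l'\lesssim\norms{\vx}^2$, which fails precisely when $\norms{\vx}$ is small and $\gamma$ is close to $1$. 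In that regime the lower bound must come from the deterministic shifts $\gamma b_i$ together with the hypothesis $\norms{\vb_I}\ge\alpha$, and stitching the two regimes together coordinate by coordinate (individual $b_i$ may vanish) is the actual content of the proof and is missing. Moreover, the constants produced by a counting argument are of the form $\epsilon^2/4$ for a small anti-concentration parameter $\epsilon$, not the specific $\theta_l'=0.99\min\{\theta_l,\alpha^2\}$ of the statement, so at best you would prove a variant with weaker, unquantified constants (which would still suffice for Theorem \ref{MainThm_rn}, but is not the lemma as stated). Finally, note that no counting is needed for the upper tail, since $\max_{|I|\ge m/2}\norms{\bm A'_I\vx'}^2=\norms{\bm A'\vx'}^2$ and the cross term for $I=[m]$ is exactly the quantity Lemma \ref{lem_bjajx} controls.
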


\begin{proof}
From the definition, it suffices to show there exist constants $\theta'_l, \theta'_u >0 $ such that the following inequality
\begin{equation}\label{eq:strongrip22}
\theta'_{l}\norm{\vx'}^2 \le \min_{I\subset [m],\abs{I}\ge m/2} \norm{\bm A'_I \vx'}^2\le \max_{I\subset [m],\abs{I}\ge m/2} \norm{\bm A'_I \vx'}^2 \le \theta'_{u}\norm{\vx'}^2
\end{equation}
holds for all $(tk+1)$-sparse signals $\vx' \in \R^{n+1}$. To this end, we denote $\bm x' = (\bm{x}^\T, z)^\T$, where $\bm x \in \mathbb R^n$ and  $z \in \mathbb R$.
We first consider the case where $z=0$.
From Lemma \ref{lem_srip}, we know that if $m \gtrsim t(k+1)\log(\mathrm en/(k+1))$ and $t>1$, then
there exist two positive constants $\theta_l, \theta_u \in (0,2)$ such that
\begin{equation} \label{eq_srip2_neq}
    \theta_l \|\bm x\|_2^2 \le \min_{I \subseteq [m], |I |\ge m/2} \|\bm A_I \bm x\|_2^2
    \le \max_{I \subseteq [m], |I|\ge m/2}\|\bm A_I \bm x\|_2^2 \le \theta_u \|\bm x\|_2^2
\end{equation}
holds for all $(tk+1)$-sparse vector $\bm x \in \mathbb R^n$ with probability at least $1 - \exp(-cm)$.
Here, $c>0$ is a universal constant.
Note that $\bm A' \bm x' = \bm A \bm x$. We immediately obtain \eqref{eq:strongrip22} for the case where $z=0$.

Next, we turn to the case where $z \ne 0$. A simple calculation shows that
\begin{equation} \label{eq_srip2_bi}
    \|\bm A'_I \bm x'\|_2^2 = \|\bm A_I \bm x + z\bm{b}_I\|_2^2 = \|\bm A_I \bm x\|_2^2 + 2z\ip{\bm A_I \bm x}{\bm b_I} + z^2\|\bm b_I\|_2^2
\end{equation}
for any $I \subseteq [m]$. Denote $\bm A = \begin{bmatrix} \va_1,\ldots, \va_m \end{bmatrix}^\T$. Note that $\sqrt m \va_j \sim \mathcal N(0, I_n)$.
Taking $\zeta = \frac{\min(\theta_l,\alpha^2)}{200\beta}$ in Lemma \ref{lem_bjajx},
we obtain that there exists a constant $C>0$ depending only on $\theta_l,\, \alpha,\, \beta$ such that when $m\ge Ct(k+1)\log(\e n/k)$,
with probability at least $1-3\exp(-c_1 m)$, it holds
\begin{equation} \label{eq_srip2_cbs}
    |\ip{\bm A_I \bm x}{\bm b_I} | =  |\ip{\bm A \vx}{\bm b_I}| \le \frac{\min\{\theta_l,\alpha^2\}}{200\beta}  \|\bm x\|_2\|\bm b\|_2
\end{equation}
for all $(tk+1)$-sparse vectors $\bm x$ and all $I \subseteq [m]$.
Here, we view ${\bm b}_I={\bm b}\mathbb{I}_I\in \R^m$ (${\mathbb I}_I(j)=1$ if $j\in { I}$ and $0$ if $j\notin { I}$), and $c_1 > 0$ is a constant depending only on $\theta_l,\, \alpha,\, \beta$.
Note that the vector $\bm b$ satisfies
\begin{equation}  \label{eq_srip2_b}
    \alpha \le \norm{\bm b_I}_2 \le \beta
\end{equation}
for all $I \subseteq [m]$ with $|I|\ge m/2$.
Putting \eqref{eq_srip2_neq}, \eqref{eq_srip2_cbs} and \eqref{eq_srip2_b} into \eqref{eq_srip2_bi},
we obtain that when $m \ge Ct(k+1)\log(\e n/k)$, with probability at least $1-4\exp(-cm)$, the following two inequalities
\begin{equation*}
    \|\bm A'_I \bm x'\|_2^2
    \ge \theta_l \|\bm x\|_2^2 - 2|z|\frac{\min\{\theta_l, \alpha^2\}}{200\beta}\|\bm x\|_2 \beta + \alpha^2 z^2
    \ge 0.99\min\{\theta_l, \alpha^2\} \|\bm x'\|_2^2,
\end{equation*}
and
\begin{eqnarray*}
    \|\bm A'_I \bm x'\|_2^2  \le  \theta_u \|\bm x\|_2^2 + 2|z|\frac{\min\{\theta_l, \alpha^2\}} {200\beta}\|\bm x\|_2 \beta + \beta^2 z^2 \le  1.01  \max\{\theta_u, \beta^2\}  \|\bm x'\|_2^2
\end{eqnarray*}
hold for all $(tk+1)$-sparse vector $\bm x' \in \mathbb R^{n+1}$ and for all $I \subseteq [m]$ with $|I|\ge m/2$.
Here, $c > 0$ is a constant depending only on $\theta_l,\, \alpha,\, \beta$. In other words, we have
\begin{equation*}
    \theta'_l \norm{\bm x'}_2^2 \le \min_{I \subseteq [m], |I |\ge m/2} \|\bm A'_I \bm x'\|_2^2
    \le \max_{I \subseteq [m], |I |\ge m/2} \|\bm A'_I \bm x' \|_2^2 \le \theta'_u \|\bm x'\|_2^2
\end{equation*}
for all $(tk+1)$-sparse vector $\bm x'$ with probability at least $1 - 4\exp(-cm)$.
Here, $\theta'_l =  0.99\min\{\theta_l, \alpha^2\}$ and $\theta'_u = 1.01  \max\{\theta_u, \beta^2\}$.
Combining the above two cases and noting that $\theta_l$, $\theta_u >0$ are universal constants, we complete the proof.
\end{proof}

Based on Lemma \ref{lem_srip2}, we are now ready to present the proof of Theorem \ref{MainThm_rn}.

\begin{proof}[Proof of Theorem \ref{MainThm_rn}]
Denote $\bm A' = \begin{bmatrix} \bm A & \bm b \end{bmatrix}$,
$\x' = (\x^\T, 1)^\T$ and $\vx'_0 = (\bm x_0^\T, 1)^\T$. 
Set
\[
    I := \{j: (\ip{\va_j}{\x} + b_j)(\ip{\va_j}{\bm x_0} + b_j) \ge 0\}.
\]
We next divide the proof into the following two cases.

{\bf Case 1: $|I| \ge m/2$}.
Set $\vh = \x' - \vx'_0$. For any $a>1$, we decompose $\vh$ into the sum of $\vh_{T_0}, \vh_{T_1}, \dots$,
where $T_0$ is an index set
which consists the indices of the $k+1$ largest coordinates of $\vx'_0$ in magnitude,
$T_1$ is the index set corresponding to the $a(k+1)$ largest coordinates of $\vh_{T_0^c}$ in magnitude,
$T_2$ is the index set corresponding to the $a(k+1)$ largest coordinates of $\vh_{(T_0 \cup T_1)^c}$ in magnitude, and so on.
For simplicity, we denote $T_{jl} := T_j \cup T_l$.  To prove the theorem,  we only need to give an upper bound for $\norms{\vh}$. Observe that
\begin{equation} \label{cla:hR0}
    \norms{\vh} \le \norms{\vh_{T_{01}}} + \norms{\vh-\vh_{T_{01}}}.
\end{equation}

We claim that, when $m\ge C (a+1)(k+1)\log(\e n/k)$, with probability at least $1-4\exp(-c m)$, it holds that
\begin{equation}\label{cla:hR1}
    \norms{\vh-\vh_{T_{01}}} \le \frac1{\sqrt a}\norm{\vh_{T_{01}}}_2 + \frac{2\sigma_k(\vx_0)_1}{\sqrt {a(k+1)}}
\end{equation}
and
\begin{equation} \label{cla:hR2}
    \norms{\vh_{T_{01}}} \le \frac{2}{\sqrt{\theta_l}- \sqrt{\theta_u}/\sqrt a} \cdot \xkh{ \epsilon+  \frac{\sqrt{\theta_u} \sigma_k(\vx_0)_1}{\sqrt{a(k+1)}}}.
\end{equation}
Here, $C,\, c,\, \theta_l$ and $\theta_u$ are positive constants depending only on $\alpha$ and $\beta$.
Putting \eqref{cla:hR1} and \eqref{cla:hR2} into \eqref{cla:hR0}, we obtain that
\[
    \norms{\vh} \le \frac{2\xkh{1+1/\sqrt a}}{\sqrt{\theta_l}- \sqrt{\theta_u}/\sqrt a} \epsilon +  \xkh{\frac{2(1+1/\sqrt a)\sqrt{\theta_u}}{\sqrt{\theta_l}- \sqrt{\theta_u}/\sqrt a} +2} \frac{\sigma_k(\vx_0)_1}{\sqrt {a(k+1)}}.
\]

It remains to prove the claim \eqref{cla:hR1} and \eqref{cla:hR2}.   Since $\x$ is the solution to $\ell_1$ minimization program \eqref{eq:probset22}, we have
\begin{align*}
    \norm{\vx'_0}_1 \ge \norm{\x'}_1 & = \norm{\vx'_0 + \vh}_1
    = \norm{(\vx'_0 + \vh)_{T_0}}_1 + \norm{(\vx'_0 + \vh)_{T_0^c}}_1 \\
    & \ge \norm{\vx'_{0,T_0}}_1 - \norm{\vh_{T_0}}_1 + \norm{\vh_{T_0^c}}_1 - \norm{\vx'_{0,T_0^c}}_1.
\end{align*}
Therefore,
\begin{equation} \label{eq_MTrn_1}
    \|\vh_{T_0^c}\|_1 \le \norm{\vh_{T_0}}_1 + 2\norm{\vx'_{0,T_0^c}}_1.
\end{equation}
From the definition of $T_j$, we obtain that, for all $j \ge 2$,
\begin{equation*}
    \norm{\vh_{T_j}}_2 \le \sqrt{a(k+1)}\norm{\vh_{T_j}}_\infty = \frac{a(k+1)}{\sqrt {a(k+1)}}\norm{\vh_{T_j}}_\infty \le \frac{\norm{\vh_{T_{j-1}}}_1}{\sqrt {a(k+1)}}.
\end{equation*}
It then gives
\begin{equation} \label{eq_MTrn_2}
    \norm{\vh_{T_{01}^c}}_2
    \le \sum_{j \ge 2}\norm{\vh_{T_j}}_2 \le \frac{1}{\sqrt {a(k+1)}}\sum_{j \ge 2}\norm{\vh_{T_{j-1}}}_1
    = \frac{1}{\sqrt {a(k+1)}}\norm{\vh_{T_0^c}}_1.
\end{equation}
Putting \eqref{eq_MTrn_1} into \eqref{eq_MTrn_2}, we obtain the conclusion of claim \eqref{cla:hR1}, namely,
\begin{equation} \label{eq_MTrn_3}
    \begin{aligned}
        \norm{\vh_{T_{01}^c}}_2 & \le \frac{1}{\sqrt {a(k+1)}}\norm{\vh_{T_0^c}}_1
        \le \frac{\norm{\vh_{T_0}}_1 + 2\norm{\vx'_{0,T_0^c}}_1}{\sqrt {a(k+1)}} \\
        & \le\frac1{\sqrt a} \norm{\vh_{T_0}}_2 + \frac{2\sigma_{k+1}(\vx'_0)_1}{\sqrt {k}}
        \le \frac1{\sqrt a}\norm{\vh_{T_{01}}}_2 + \frac{2\sigma_k(\vx_0)_1}{\sqrt {a(k+1)}},
    \end{aligned}
\end{equation}
where the third inequality follows the Cauchy-Schwarz inequality
and the last inequality comes from the fact $\sigma_{k+1}(\vx'_0)_1 \le \sigma_k(\vx_0)_1$ by the definitions of $\x'$ and $\sigma_k(\cdot)_1$.

We next turn to prove the claim \eqref{cla:hR2}. Observe that
\begin{equation} \label{eq:Ahdiff}
    \norm{\bm A'_I \vh}_2 \ge \norm{\bm A'_I \vh_{T_{01}}}_2 - \norm{\bm A'_I \vh_{T_{01}^c}}_2.
\end{equation}
For the left hand side of \eqref{eq:Ahdiff}, by the definition of $I$, we have
\begin{equation} \label{eq:Ahdiff1}
\begin{aligned}
\norm{\bm A'_I \vh}_2 &= \norm{|\bm A'_I \x'| - |\bm A'_I \vx'_0|}_2 \\
& \le \norm{|\bm A' \x'| - |\bm A' \vx'_0|}_2 \\
& \le \norms{|\bm A' \x'|- \vy} +\norms{|\bm A' \vx'_0|- \vy} \\
& \le 2\epsilon.
\end{aligned} 
\end{equation}
For the first term of the right hand side of \eqref{eq:Ahdiff}, note that $|I|\ge m/2$.
According to Lemma \ref{lem_srip2},  we obtain that if $m\ge C (a+1)(k+1)\log(\e n/k)$, then with probability at least $1-4\exp(-c m)$ it holds
\begin{equation} \label{eq:Ahdiff2}
    \norm{\bm A'_I \vh_{T_{01}}}_2 \ge \sqrt{\theta_l} \norms{\vh_{T_{01}}}.
\end{equation}
To give an upper bound for the term $ \norm{\bm A'_I \vh_{T_{01}^c}}_2$, note that $\norm{\vh_{T_{01}^c}}_\infty \le \norm{\vh_{T_1}}_1/a(k+1)$.
Let $\theta:=\max\xkh{\norm{\vh_{T_1}}_1/a(k+1), \norm{\vh_{T_{01}^c}}_1/a(k+1) }$.
Then by the Lemma \ref{le:spadem}, we could decompose the vector $\vh_{T_{01}^c}$ into the following form:
\[
\vh_{T_{01}^c} =\sum_{j=1}^N \lambda_j \vu_j, \quad \text{with} \quad 0\le \lambda_j\le 1, \quad \sum_{j=1}^N \lambda_j=1,
\]
where $\vu_j$ are $a(k+1)$-sparse vectors satisfying
\[
\normone{\vu_j}=\normone{ \vh_{T_{01}^c}}, \quad \norm{\vu_j}_\infty \le \theta.
\]
Therefore, we have
\[
\norms{\vu_j} \le \sqrt{\theta \normone{ \vh_{T_{01}^c}}}.
\]
We notice from \eqref{eq_MTrn_1} that
\[
\normone{ \vh_{T_{01}^c}} \le \normone{ \vh_{T_{0}^c}} \le \norm{\vh_{T_0}}_1 + 2\sigma_k(\vx_0)_1.
\]
Thus, if $\theta = \norm{\vh_{T_1}}_1/a(k+1)$, then we have
\[
\norms{\vu_j} \! \le \! \sqrt{\frac{\norm{\vh_{T_1}}_1 \normone{\vh_{T_{01}^c}}}{a(k+1)}} \! \le \! \sqrt{\frac{\norm{\vh_{T_0^c}}_1 \normone{ \vh_{T_{01}^c}}}{a(k+1)}}
\! \le \! \frac{\norm{\vh_{T_0}}_1 + 2\sigma_k(\vx_0)_1}{\sqrt{a(k+1)}} \! \le \! \frac{\norm{\vh_{T_0}}_2}{\sqrt{a}}+\frac{ 2\sigma_k(\vx_0)_1}{\sqrt{a(k+1)}}.
\]
If $\theta=\norm{\vh_{T_{01}^c}}_1/a(k+1) $, then
\[
\norms{\vu_j} \le \frac{\norm{\vh_{T_{01}^c}}_1}{\sqrt{a(k+1)}} \le \frac{\norm{\vh_{T_0}}_2}{\sqrt{a}}+\frac{2\sigma_k(\vx_0)_1}{\sqrt{a(k+1)}}.
\]
Therefore, for the second term of the right hand side of \eqref{eq:Ahdiff}, according to Lemma \ref{lem_srip2}, we obtain that if $m\ge C (a+1)(k+1)\log(\e n/k)$, then with probability at least $1-4\exp(-c m)$ it holds
\begin{equation} \label{eq:Ahdiff3}
    \norm{\bm A'_I \vh_{T_{01}^c}}_2 =\norms{\sum_{j=1}^N \lambda_j \bm A'_I \vu_j } \le \sqrt{\theta_u} \sum_{j=1}^N \lambda_j \norms{\vu_j} \le \sqrt{\theta_u} \xkh{\frac{\norm{\vh_{T_0}}_2}{\sqrt{a}}+\frac{2\sigma_k(\vx_0)_1}{\sqrt{a(k+1)}}}.
\end{equation}
Putting \eqref{eq:Ahdiff1}, \eqref{eq:Ahdiff2} and \eqref{eq:Ahdiff3} into \eqref{eq:Ahdiff}, we immediately obtain that if $m\ge C a(k+1)\log(\e n/k)$, then with probability at least $1-4\exp(-c m)$ we have
\[
2\epsilon \ge \sqrt{\theta_l} \norms{\vh_{T_{01}}} - \sqrt{\theta_u}\xkh{\frac{\norm{\vh_{T_{01}}}_2}{\sqrt{a}}+\frac{2\sigma_k(\vx_0)_1}{\sqrt{a(k+1)}}},
\]
which gives
\[
\norms{\vh_{T_{01}}} \le \frac{2}{\sqrt{\theta_l}-\sqrt{\theta_u}/\sqrt a} \cdot \xkh{\epsilon + \frac{\sqrt{\theta_u} \sigma_k(\vx_0)_1}{\sqrt{a(k+1)}}}.
\]

{\bf Case 2: $|I| < m/2$}.  For this case, denote  $\vh^+ = \x' + \vx'_0$.  Replacing $\vh$ and the subset $I$ in Case 1 by $\vh^+$ and $I^c$ respectively, and applying the same argument,
we could obtain that when $m\ge C (a+1)(k+1)\log(\e n/k)$, with probability at least $1-4\exp(-c m)$, it holds
\begin{equation} \label{eq:h+}
\norm{\vh_+} \le \frac{2\xkh{1+1/\sqrt a}}{\sqrt{\theta_l}-\sqrt{\theta_u}/\sqrt a} \epsilon  + \xkh{\frac{2(1+1/\sqrt a)\sqrt{\theta_u}}{\sqrt{\theta_l}-\sqrt{\theta_u}/\sqrt a} +2} \frac{\sigma_k(\vx_0)_1}{\sqrt{a(k+1)}}.
\end{equation}
However, recall that $\x' = (\x^\T, 1)^\T$ and $\vx'_0 = (\bm x_0^\T, 1)^\T$.  It means $\norm{\vh_+}_2 \ge 2$, which contradicts to \eqref{eq:h+} by the assumption of $\epsilon$ and $\sigma_k(\vx_0)_1$, i.e., 
$ K_1\epsilon + K_2 \frac{\sigma_k(\vx_0)_1}{\sqrt {a(k+1)}} <2 $.
Therefore, Case 2 does not hold.

Combining the above two cases, we complete our proof.
\end{proof}

\section{Proof of Theorem \ref{lem_ripc20} and Theorem \ref{th:stubon0}}

\subsection{Proof of Theorem \ref{lem_ripc20}}

\begin{proof}
Without loss of generality, we assume that $\normf{\bm H'}=1$.
Observe that
\[
\frac{1}{m}\norm{\A'(\bm H')}_1=\frac{1}{m} \sum_{j=1}^m \abs{\va_j^* \bm H \va_j+ 2(b_j ( \va_j^*\vh))_{\Re}} := \frac{1}{m} \sum_{j=1}^m \xi_j.
\]
For any fixed $\bm H \in \mathbb H^{n\times n}$ and $\vh \in \C^n$, the terms $\xi_j, j=1,\ldots,m$ are independent sub-exponential random variables with the maximal sub-exponential norm
\[
K:=\max_{1\le j\le m}C_1(\normf{\bm H}+ \norm{b_j}_{\psi_2} \norm{\vh}) \le C_2
\]
for some universal constants $C_1,\, C_2>0$. Here, we use the fact $\max\xkh{\normf{\bm H},\norm{\vh}} \le \normf{\bm H'} = 1$.
For any $0<\epsilon \le 1$, the Bernstein's inequality gives
\begin{equation*}
\PP\Biggl(\Bigl| \frac{1}{m} \sum_{j=1}^m \xkh{\xi_j  - \E \xi_j} \Bigr|\ge \epsilon\Biggr) \le 2\exp\xkh{-c\epsilon^2 m},
\end{equation*}
where $c>0$ is a universal constant. According to Lemma \ref{lem_bounds}, we obtain that
\[
\frac 13 \E \sqrt{ \normf{\bm H}^2+ |b_j|^2 \norm{\vh}^2 } \le \E \xi_j \le 2 \E \sqrt{3 \normf{\bm H}^2+ |b_j|^2 \norm{\vh}^2}.
\]
This gives
\[
    \frac{1}{m} \sum_{j=1}^m \E \xi_j \le \frac{2}{m} \sum_{j=1}^m \E \xkh{\sqrt 3 \normf{\bm H}+ |b_j| \norm{\vh}}
    \le 2\sqrt 3 \normf{\bm H}+2c_2\norm{\vh} \le 2\theta^+,
\]
where $\theta^+:=\max(\sqrt 6, c_2)$. Here, we use the fact $\normf{\bm H'}^2=\normf{\bm H}^2+2\norm{\vh}^2=1$, $\E\norm{\vb}_1\le \sqrt m \E\norm{\vb} \le c_2m$,
and $\frac{a+b}{\sqrt 2} \le \sqrt{a^2+b^2} \le a+b$ for any positive number $a,b \in \R$.
Similarly, we could obtain
\[
\frac{1}{m} \sum_{j=1}^m \E \xi_j \ge \frac{1}{3\sqrt 2}\cdot \frac 1m \sum_{j=1}^m \E\xkh{\normf{\bm H}+|b_j|\norm{\vh}}
\ge \frac{1}{3\sqrt 2} \xkh{\normf{\bm H}+c_1 \norm{\vh}} \ge \frac{\theta^-} 6,
\]
where $\theta^-:=\min(1, c_1/\sqrt 2)$.
Collecting the above estimators, we obtain that,
with probability at least $1-2\exp(-c\epsilon^2 m)$, the following inequality
\begin{equation} \label{eq:fixHh}
\frac{\theta^-} 6 -\epsilon \le \frac{1}{m}\norm{\A'(\bm H')}_1\le 2\theta^+  +\epsilon
\end{equation}
holds for a fixed $\bm H' \in \mathbb H^{(n+1)\times (n+1)}$.
We next show that \eqref{eq:fixHh} holds for all $\bm H' \in \mathcal X$, where
\[
\mathcal X \!:=\! \dkh{\bm H'\!:=\!\begin{bmatrix} \bm H & \vh \\ \vh^* & 0 \end{bmatrix} \in \mathbb H^{(n+1)\times (n+1)}: \normf{\bm H'}=1,\, \rank(\bm H) \le 2,\, \norm{\bm H}_{0, 2} \le k,\, \norm{\vh}_0 \le k}.
\]
To this end, we adopt a basic version of a $\delta$-net argument. Assume that $\mathcal N_\delta$ is a $\delta$-net of $\mathcal X$,
i.e., for any $\bm H'=\begin{bmatrix} \bm H & \vh \\ \vh^* & 0 \end{bmatrix} \in \mathcal X$
there exists a $\bm H_0':=\begin{bmatrix} \bm H_0 & \vh_0 \\ \vh_0^* & 0 \end{bmatrix} \in \mathcal N_\delta$ such that $\normf{\bm H - \bm H_0}\le \delta$ and $\norm{\vh-\vh_0}\le \delta$.
Using the same idea of Lemma 2.1 in \cite{pr_2021}, we obtain that the covering number of $\mathcal X$ is
\[
|\mathcal N_\delta| \le \xkh{\frac{9\sqrt 2 \e n}{\delta k} }^{4k+2}\cdot \left(\begin{array}{l} n \\ k \end{array}\right) \xkh{1+\frac 2{\delta}}^{2k}  \le \exp\xkh{C_3 k\log (\e n/ \delta k)},
\]
where $C_3>0$ is a universal constant.
Note that $\vh-\vh_0$ has at most $2k$ nonzero entries. We obtain that if $m\gtrsim k\log(\e n/k)$, then with probability at least $1-3\exp(-c m)$, it holds
\begin{align*}
\abs{\frac{1}{m}\norm{\A'(\bm H')}_1-\frac{1}{m}\norm{\A'(\bm H_0')}_1}
& \le \frac{1}{m}\norm{\A'(\bm H'-\bm H'_0)}_1 \\
& \le \frac{1}{m}\norm{\mathcal A(\bm H-\bm H_0)}_1 +\frac 2 m \sum_{j=1}^m |b_j| |\va_j^* (\vh-\vh_0)| \\
& \le \frac{1}{m}\norm{\mathcal A(\bm H-\bm H_0)}_1 + 2\sqrt{\frac 1m \sum_{j=1}^m |b_j|^2} \! \sqrt{\frac 1m \sum_{j=1}^m  |\va_j^* (\vh-\vh_0)|^2}\\
& \le 2.45 \normf{\bm H-\bm H_0} + 3(c_2+1) \norm{\vh-\vh_0}\\
& \le 3\xkh{c_2+2} \delta,
\end{align*}
where the linear map $\mathcal A(\cdot)$ is defined as $\mathcal{A}(\bm H) := (\va_1^*\bm H \va_1, \dots, \va_m^*\bm H \va_m)$,
and the fourth inequality follows from the combination of Lemma \ref{lem_ripc}, the fact $ \frac 1m \sum_{j=1}^m \va_j \va_j^* \le 3/2$ with probability at least $1-\exp(-c m)$, and
\[
\frac 1m \sum_{j=1}^m |b_j|^2 \le \frac{\E \norm{\vb}^2}{m}+1\le c_2 +1
\]
with probability at least $1-2\exp(-c m)$. Choosing $\epsilon :=\frac{1}{48}$, $\delta:=\frac{\theta^-}{48(c_2+2)}$, and taking the union bound, we obtain that the following inequality
\[
\frac{\theta^-} {12}  \le \frac{1}{m}\norm{\A'(\bm H')}_1\le 3\theta^+ \quad \mbox{for all} \quad \bm H'\in \mathcal X
\]
holds with probability at least
\[
1-3\exp(-cm)-2 \exp\xkh{C_3 k \log (\e n/ \delta k)}\cdot\exp(-c \epsilon^2 m) \ge 1-5 \exp(-c' m),
\]
provided $m\ge C' k\log(\e n/k)$, where $C',\, c'>0$ are constants depending only on $c_1$ and $c_2$.
\end{proof}

\subsection{Proof of Theorem \ref{th:stubon0}}

\begin{proof}
The proof of this theorem is adapted from that of Theorem 1.3 in \cite{pr_2021}. Note that the $\ell_1$-minimization problem we consider is
\begin{equation} \label{eq:compra}
\argmin{\vx \in \C^n} \quad \norm{\vx}_1 \quad \mathrm{s.t.} \quad \norm{\mathcal A'(\vx')-\vy'}\le \epsilon \quad \mbox{with} \quad \vx'= \begin{pmatrix} \vx \\ 1\end{pmatrix}.
\end{equation}
Here, with some abuse of notation, we set
\[
\mathcal A'(\vx'):=\mathcal A'(\vx' \vx') = \xkh{|\va_1'^* \vx'|^2, \ldots, |\va_m'^* \vx'|^2} \quad \mbox{with} \quad \va_j=\begin{pmatrix} \va_j\\b_j\end{pmatrix},\quad  j=1,\ldots,m.
\]
Let $\x \in \C^n$ be a solution to \eqref{eq:compra}. Without loss of generality, we assume $\nj{\x',\vx_0'} \ge 0$ (Otherwise, we can choose $e^{\iu\theta} \vx_0'$ for an appropriate $\theta$),
where $\x'= \begin{pmatrix} \x \\ 1\end{pmatrix}$ and $\vx_0'= \begin{pmatrix} \vx_0 \\ 1\end{pmatrix}$. Set
\[
\hat{\bm X}':=\x'\x'^*=\begin{pmatrix} \x\x^* & \x\\ \x^*& 1 \end{pmatrix}
\]
and
\[
\bm H':=\x'\x'^*-\vx'_0{\vx'_0}^*=\begin{pmatrix} \x\x^* -\vx_0\vx_0^* & \x -\vx_0 \\ \x^* -\vx_0^*& 0 \end{pmatrix} :=\begin{pmatrix} \bm H & \vh \\ \vh^*& 0\end{pmatrix}.
\]
Therefore, it suffices to give an upper bound for $\normf{\bm H'}$. Denote $T_0:=\supp(\vx_0)$ and $T'_0:=T_0 \cup \dkh{n+1} $.
Let $T_1$ be the index set corresponding to the indices of the $ak$-largest elements of $\x_{T_0^c}$ in magnitude, and $T_2$ contain the indices of the next $ak$ largest elements, and so on.
Set $T_{01}:=T_0 \cup T_1$, $T'_{01}:=T'_0 \cup T_1$,   $\bh:=\vh_{T_{01}}$, $\bH=\bm H_{T_{01}, T_{01}}$, and $\bH':= \bm H'_{T'_{01}, T'_{01}}$.
Noting that
\begin{equation}\label{eq:HbH}
\normf{\bm H'}\le  \normf{\bH'} +\normf{\bm H'-\bH'},
\end{equation}
we next consider the terms  $\normf{\bH'}$ and $\normf{\bm H'-\bH'}$.
We claim that
\begin{equation} \label{cla:bH}
\normf{\bm H'-\bH'} \le  \xkh{ \frac1a +\frac{4}{\sqrt a}} \normf{\bH'}
\end{equation}
and
\begin{equation} \label{cal:Hprime}
\normf{\bH'} \le \frac 1{c - C\xkh{ \frac{4}{\sqrt a}+\frac 1a}} \cdot \frac{2\epsilon}{\sqrt m}.
\end{equation}
Combining (\ref{eq:HbH}), (\ref{cla:bH}) and (\ref{cal:Hprime}), we obtain that
\[
\normf{\bm H'}\le \frac {\frac{1}{a} +\frac{4}{\sqrt a}+1}{c - C\xkh{ \frac{4}{\sqrt a}+\frac 1a}} \cdot \frac{2\epsilon}{\sqrt m}.
\]
According to Lemma \ref{le:uuvv}, we immediately have
\[
\min_{\theta \in \R} \norms{\x'-\e^{\iu\theta} \vx'_0} \le  \frac{\sqrt 2\norm{\bm H'}}{\norm{\vx_0}+1} \le  \frac {\frac1a +\frac{4}{\sqrt a}+1}{c - C\xkh{ \frac{4}{\sqrt a}+\frac 1a}} \cdot \frac{2\sqrt 2 \epsilon}{\xkh{\norm{\vx_0}+1}\sqrt m}.
\]
By the definition of $\x'$ and $\vx'_0$, we arrive at the conclusion.

It remains to prove the claims \eqref{cla:bH} and \eqref{cal:Hprime}. Note that
\begin{equation} \label{eq:diffH1}
\normf{\bm H'-\bH'} \le \sum_{i\ge 2, j\ge 2} \normf{\bm H_{T_i, T_j}} + 2 \sum_{j\ge 2} \normf{\bm H'_{T'_0, T_j}}+2 \sum_{j\ge 2} \normf{\bm H'_{T_1, T_j}} .
\end{equation}
We first give an upper bound for the term $\sum_{i\ge 2, j\ge 2} \normf{\bm H'_{T_i, T_j}}$.
Noting that $\vx_0$ is a $k$-sparse vector and $\x \in \C^n$ is the solution to \eqref{eq:compra}, we obtain that
\[
\norm{\vx_0}_1 \ge \normone{\x}=\normone{\x_{T_0}}+\normone{\x_{T_0^c}},
\]
which implies $\normone{\x_{T_0^c}} \le \normone{\x_{T_0}-\vx_0}$.
Moreover, by the definition of $T_j$, we know that for all $j\ge 2$, it holds $ \norm{\x_{T_j} }_2 \le \frac{\norm{\x_{T_{j-1}}}_1}{\sqrt{ak}}$. It then implies
\begin{equation} \label{eq:xhaT0}
\sum_{j\ge 2} \norms{\x_{T_{j}}} \le \frac 1{\sqrt{ak}} \sum_{j\ge 2} \normone{\x_{T_{j-1}}}
\le  \frac 1{\sqrt{ak}}  \norm{\x_{T_0^c}}_1 \le \frac{1}{\sqrt a}  \norms{\x_{T_0}-\vx_0}.
\end{equation}
Therefore, the first term of \eqref{eq:diffH1} can be estimated as
\begin{equation} \label{eq:diffH2}
    \begin{aligned}
        \sum_{i\ge 2, j\ge 2} \normf{\bm H_{T_i, T_j}}= \sum_{i\ge 2, j\ge 2} \norms{\x_{T_i}} \norms{\x_{T_j}}  = \xkh{ \sum_{j\ge 2} \norms{\x_{T_{j}}}}^2 \le \frac 1{ak} \norm{\x_{T_{0}^c}}_1^2 \\
        = \frac 1{ak} \norm{\bm H_{T_0^c, T_0^c}}_1 \le \frac 1{ak} \norm{\bm H_{T_0, T_0}}_1 \le \frac1a \normf{\bH'},
        \end{aligned}
\end{equation}
where the second inequality follows from 
\[
\normone{\bm H-\bm H_{T_0,T_0}} =\normone{\x\x^*-(\x\x^*)_{T_0,T_0}} \le \normone{\vx_0{\vx_0}^*}-\normone{(\x\x^*)_{T_0,T_0}} \le \normone{\bm H_{T_0, T_0}}.
\]
Here, the first inequality comes from $\normone{\x} \le \normone{\vx_0}$.

For the second term and the third term of \eqref{eq:diffH1}, we obtain that
\begin{equation} \label{eq:diffH3}
\begin{aligned}
    \sum_{j\ge 2} \normf{\bm H'_{T'_0, T_j}}+\sum_{j\ge 2} \normf{\bm H'_{T_1, T_j}} &= \norm{\x'_{T'_0}}\sum_{j\ge 2} \norm{\x'_{T_j}} + \norm{\x'_{T_1}}\sum_{j\ge 2} \norm{\x'_{T_j}} \\
    &\le \frac{1}{\sqrt a} \norms{\x'_{T'_0}-\vx'_0} \xkh{\norms{\x'_{T'_0}}+\norms{\x'_{T_1}}} \\
    &\le \frac{\sqrt 2}{\sqrt a} \norms{\x'_{T'_{01}}-\vx'_0} \norms{\x'_{T'_{01}}} \\
    &\le \frac{2}{\sqrt a} \normf{\bH'},
\end{aligned}
\end{equation}
where the first inequality follows from \eqref{eq:xhaT0} due to $\x'_{T_j}=\x_{T_j}$ for all $j\ge 1$, and the last inequality comes from Lemma \ref{le:uuvv}.
Putting \eqref{eq:diffH2} and \eqref{eq:diffH3} into \eqref{eq:diffH1}, we obtain that
\[
\normf{\bm H'-\bH'} \le \xkh{ \frac1a +\frac{4}{\sqrt a}} \normf{\bH'}.
\]
This proves the claim \eqref{cla:bH}.

Finally, we turn to prove the claim \eqref{cal:Hprime}. Note that $\norm{\mathcal A'(\x')-\y}\le \epsilon$ and $\y:=\mathcal A'(\vx_0')+\epsilon$, which implies
\[
\norms{\mathcal A'(\bm H')} \le \norms{\mathcal A'(\x')-\y}+ \norms{\mathcal A'(\vx_0')-\y} \le 2\epsilon.
\]
Thus, we have
\begin{equation} \label{eq:tian1}
\frac{2\epsilon}{\sqrt m} \ge \frac 1{\sqrt m} \norms{\mathcal A'(\bm H')} \ge \frac1m \normone{\mathcal A'(\bm H')} \ge  \frac1m \normone{\mathcal A'(\bH')} - \frac1m \normone{\mathcal A'(\bm H'-\bH')}.
\end{equation}
Recall that $\bH':=\begin{pmatrix} \bH & \bh \\ \bh^*& 0\end{pmatrix}$ with $\rank(\bH)\le 2$, $\norm{\bH}_{0,2}\le (a+1)k$, and $\norm{\bh}_0 \le (a+1)k$. It then follows from the RIP of $\mathcal A'$  that
\begin{equation}  \label{eq:tian2}
\normone{\mathcal A'(\bH')}  \ge c \normf{\bH'}.
\end{equation}
To prove \eqref{cal:Hprime}, it suffices to give an upper bound for the term $\frac1m \normone{\mathcal A'(\bm H'-\bH')}$. Observe that
\begin{equation}  \label{eq:A0}
\bm H'-\bH' = (\bm H'_{T'_0, {T'^c_{01}}}+\bm H'_{{T'^c_{01}},T'_0}) + (\bm H'_{T_1, T'^c_{01}}+\bm H'_{ T'^c_{01},T_1})+\bm H'_{T'^c_{01}, T'^c_{01}}.
\end{equation}
Since
\[
\bm H'_{T'_0, {T'^c_{01}}}+\bm H'_{ {T'^c_{01}},T'_0} =\sum_{j\ge 2} (\bm H'_{T'_0, T_{j}}+\bm H'_{ T_{j},T'_0}) =\sum_{j\ge 2} \begin{pmatrix} \x_{T_0}\x_{T_j}^*+\x_{T_j}\x_{T_0}^* & \x_{T_j} \\  \x_{T_j}^* & 0\end{pmatrix},
\]
then the RIP of $\mathcal A'$ implies
\begin{equation} \label{eq:A1}
\begin{aligned}
\frac1m \normone{\mathcal A'(\bm H'_{T'_0, {T'^c_{01}}}+\bm H'_{ {T'^c_{01}},T'_0}  )} &\le C \sum_{j\ge 2} \xkh{ \normf{\x_{T_0}\x_{T_j}^*+\x_{T_j}\x_{T_0}^*} +2\norms{ \x_{T_j}}} \\
&\le 2\sqrt 2 C \norms{\x'_{T'_0}}  \sum_{j\ge 2} \norms{\x_{T_{j}}} \\
&\le \frac{2\sqrt 2}{\sqrt a}C \norms{\x'_{T'_0}}  \norms{\x'_{T'_{01}}-\vx'_0}.
\end{aligned}
\end{equation}
Similarly, we could obtain
\begin{equation} \label{eq:A2}
\frac1m \normone{\mathcal A'(\bm H'_{T_1, T'^c_{01}}+\bm H'_{ T'^c_{01},T_1})}  \le  \frac{2\sqrt 2}{\sqrt a}C \norms{\x'_{T_1}} \norms{\x'_{T'_{01}}-\vx'_0}.
\end{equation}
Finally, observe that  $\frac1m \normone{\mathcal A' (\bm H'_{T'^c_{01}, T'^c_{01}})} = \frac1m \normone{\mathcal A (\bm H_{T^c_{01}, T^c_{01}})}$. Using the same technique as \cite[Eq. (3.16)]{pr_2021}, we could obtain
\begin{equation}  \label{eq:A3}
\frac1m \normone{\mathcal A'(\bm H'_{T'^c_{01}, T'^c_{01}})}  \le \frac{C}{a} \normf{\bH'}.
\end{equation}
Putting \eqref{eq:A1}, \eqref{eq:A2} and \eqref{eq:A3} into \eqref{eq:A0}, we have
\begin{equation}  \label{eq:tian3}
\frac1m \normone{\mathcal A'(\bm H'-\bH')} \le \frac{4}{\sqrt a}C \norms{\x'_{T'_{01}}} \norms{\x'_{T'_{01}}-\vx'_0}+\frac{C}{a} \normf{\bH'} \le C\xkh{\frac{4}{\sqrt a}+\frac 1a}\normf{\bH'}.
\end{equation}
Combining \eqref{eq:tian1},  \eqref{eq:tian2} and  \eqref{eq:tian3}, we immediately obtain
\[
\xkh{c- C\xkh{ \frac{4}{\sqrt a}+\frac 1a}} \normf{\bH'} \le \frac{2\epsilon}{\sqrt m},
\]
which means
\[
\normf{\bH'} \le \frac 1{c- C\xkh{ \frac{4}{\sqrt a}+\frac 1a}} \cdot  \frac{2\epsilon}{\sqrt m}.
\]
This completes the proof of claim \eqref{cal:Hprime}.
\end{proof}

\appendix

\section{Supporting lemmas}

The following lemma gives a way for how to decompose a vector $\vv \in \R^n$ into the convex combination of several $k$-sparse vectors.

\begin{lem}[\cite{cai2013sparse,xu2013}] \label{le:spadem}
Suppose that $\vv \in \Rn$ satisfying $\norm{\vv}_\infty \le \theta$ and $\normone{\vv}\le k\theta$, where $\theta >0$ and $k\in \mathbb Z_+$. Then we have
\[
\vv=\sum_{j=1}^N \lambda_j \vu_j \quad \mathrm{with}\quad 0\le \lambda_j \le 1, \quad \sum_{j=1}^N \lambda_j=1,
\]
where $\vu_j \in \Rn$ is $k$-sparse vectors and $\normone{\vu_j} \le \normone{\vv}, \; \norm{\vu_j}_\infty \le \theta$.
\end{lem}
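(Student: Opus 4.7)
The plan is to realise $\vv$ as a point of a compact polytope whose extreme points are automatically $k$-sparse vectors with the required norm bounds, and to conclude by a standard extreme-point argument. I would first reduce to the nonnegative case: setting $\tilde v_i := |v_i|$ and $\varepsilon_i := \mathrm{sign}(v_i)$ (with $\varepsilon_i := 0$ when $v_i = 0$), the vector $\tilde \vv$ has the same $\ell_1$- and $\ell_\infty$-norms as $\vv$ and the same support. If I can produce a decomposition $\tilde \vv = \sum_j \lambda_j \tilde \vu_j$ with each $\tilde \vu_j \in [0,\theta]^n$ being $k$-sparse and satisfying $\normone{\tilde \vu_j} \le \normone{\vv}$, then defining $(\vu_j)_i := \varepsilon_i (\tilde \vu_j)_i$ recovers the desired decomposition for $\vv$, since coordinatewise sign flipping is simultaneously an $\ell_1$- and $\ell_\infty$-isometry and preserves sparsity.

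Next I would introduce the polytope
\[
P := \left\{ \vu \in \R^n : 0 \le u_i \le \theta \text{ for all } i, \ \sum_{i=1}^n u_i = \normone{\vv}, \ u_i = 0 \text{ for } i \notin \supp(\vv) \right\}.
\]
Since $\normone{\tilde \vv} = \normone{\vv}$ and $\norm{\tilde \vv}_\infty \le \theta$, the vector $\tilde \vv$ lies in $P$, and $P$ is clearly compact, so $\tilde \vv$ can be written as a convex combination $\tilde \vv = \sum_{j=1}^N \lambda_j \vu_j$ of finitely many extreme points $\vu_j$ of $P$, with $\lambda_j \ge 0$ and $\sum_j \lambda_j = 1$.

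The core of the argument is then to verify that every extreme point of $P$ is automatically $k$-sparse. Let $s := |\supp(\vv)|$. At an extreme point the linear equality $\sum_i u_i = \normone{\vv}$ is always active, so achieving a vertex requires $s-1$ further active box constraints drawn from $\{u_i = 0\}$ and $\{u_i = \theta\}$ on the $s$ coordinates of $\supp(\vv)$; consequently at most one coordinate of an extreme point can lie strictly in $(0,\theta)$, and all the others in $\supp(\vv)$ must belong to $\{0,\theta\}$. Writing $t$ for the number of entries equal to $\theta$, either no free coordinate exists, in which case $t \theta = \normone{\vv} \le k \theta$ forces $t \le k$; or a free coordinate is present, in which case $t \theta < \normone{\vv} \le k\theta$ forces $t \le k-1$ and the total number of nonzero entries is $t + 1 \le k$. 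In both cases the extreme point has at most $k$ nonzero entries, $\ell_\infty$-norm at most $\theta$, and $\ell_1$-norm equal to $\normone{\vv}$, which is exactly what the lemma requires.

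I do not expect any serious analytic obstacle: the argument is purely elementary convex geometry. The only step that requires a bit of care is the counting of nonzero entries of an extreme point, where the hypothesis $\normone{\vv} \le k\theta$ is used decisively to rule out vertices with $k+1$ nonzero entries; this is precisely why this $\ell_1$ hypothesis, together with $\norm{\vv}_\infty \le \theta$, appears in the statement of the lemma.
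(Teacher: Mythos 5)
Your argument is correct. Note that the paper itself offers no proof of this lemma: it is imported verbatim from Cai--Zhang and Xu--Xu, so there is no in-paper argument to compare against; the cited sources establish it by an explicit inductive ``peeling'' construction that builds the sparse pieces one at a time. Your route is a cleaner, non-constructive alternative: you reduce to the nonnegative case by a sign isometry, place $\tilde{\vv}$ in the polytope $P=\{\vu : 0\le u_i\le\theta,\ \sum_i u_i=\normone{\vv},\ \supp(\vu)\subseteq\supp(\vv)\}$, and invoke the vertex representation of a compact polytope. The key step --- that every vertex of $P$ has at most one coordinate strictly inside $(0,\theta)$, because the single equality constraint must be supplemented by $s-1$ active box constraints --- is exactly right, and your case analysis on the number $t$ of coordinates equal to $\theta$ (using $\normone{\vv}\le k\theta$ to get $t\le k$ with no free coordinate, and $t\le k-1$ hence $t+1\le k$ with one) correctly delivers $k$-sparsity; the $\ell_\infty$ bound and $\normone{\vu_j}=\normone{\vv}$ are automatic from membership in $P$. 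What your approach buys is brevity and a transparent explanation of why the hypothesis $\normone{\vv}\le k\theta$ is exactly the right one; what the constructive proofs buy is an explicit algorithm producing the $\vu_j$ and, in the Xu--Xu version, control on the number $N$ of terms, neither of which is needed for the application in this paper.
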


\begin{lem}[\cite{pr_2021}] \label{lem_ripc}
Let the linear map $\mathcal{A}(\cdot)$ be defined as
\begin{equation*}
\mathcal{A}(\bm H) := (\va_1^* \bm H \va_1, \dots, \va_m^*\bm H \va_m),
\end{equation*}
where $\va_j \sim 1/\sqrt{2}\cdot \mathcal N(0,I_n)+\mathrm i/\sqrt{2}\cdot \mathcal N(0,I_n), j=1,\ldots,m$ are i.i.d. complex Gaussian random vectors.
If $m \gtrsim k\log(n/k)$, then with probability at least $1 - 2\exp(-c_0m)$, $\mathcal{A}$ satisfies
\begin{equation*}
0.12\|\bm H\|_F \le \frac{1}{m}\|\mathcal{A}(\bm H)\|_1 \le 2.45\|\bm H\|_F
\end{equation*}
for all $H \in \mathbb{H}^{n \times n}$ with $\mathrm{rank}(\bm H) \le 2$ and $\|\bm H\|_{0, 2} \le k$. Here, $\|\bm H\|_{0, 2}$ denotes the number of non-zero rows in $\bm H$.
\end{lem}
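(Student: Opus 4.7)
The plan is to establish this $\ell_1$-RIP via standard pointwise concentration combined with a $\delta$-net argument, exploiting the rank-$2$ structure to reduce each summand to a two-dimensional random variable. First, I would fix $\bm H \in \mathbb{H}^{n\times n}$ with $\normf{\bm H}=1$, $\rank(\bm H)\le 2$ and $\norm{\bm H}_{0,2}\le k$, and use the Hermitian spectral decomposition to write $\bm H = \lambda_1 \vu_1\vu_1^* + \lambda_2 \vu_2\vu_2^*$ with $\lambda_1^2+\lambda_2^2=1$ and $\vu_1\perp\vu_2$. By unitary invariance of the complex Gaussian measure, $\va_j^*\bm H \va_j = \lambda_1 g_{j,1} + \lambda_2 g_{j,2}$, where $g_{j,1},g_{j,2}$ are independent $\mathrm{Exp}(1)$ random variables, so $|\va_j^*\bm H \va_j|$ is sub-exponential with norm of order $\normf{\bm H}=1$.

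Next, I would compute $\mu(\lambda_1,\lambda_2) := \E\abs{\lambda_1 g_1 + \lambda_2 g_2}$ on the unit circle $\lambda_1^2+\lambda_2^2=1$ and show that $c_1 \le \mu \le c_2$ for explicit universal constants $c_1,c_2>0$. The upper bound $\mu \le |\lambda_1|+|\lambda_2|\le\sqrt 2$ follows from the triangle inequality and $\E g_i=1$. The lower bound is immediate when $\lambda_1\lambda_2\ge 0$ (where $\mu=|\lambda_1|+|\lambda_2|\ge 1$); the extremal case $\lambda_1=-\lambda_2=1/\sqrt 2$ produces a Laplace-type variable with $\mu=1/\sqrt 2>0$, so by continuity $\mu$ is bounded away from zero on the whole circle. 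The specific numerical constants $0.12$ and $2.45$ would emerge from a careful optimization of $\mu$ on this circle balanced against the subsequent discretization loss.

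With these expectation bounds in hand, Bernstein's inequality for sub-exponential variables gives, for each fixed $\bm H$ and any $0<\epsilon\le 1$,
\[
\PP\!\xkh{\abs{\frac{1}{m}\norm{\mathcal A(\bm H)}_1 - \mu(\bm H)} > \epsilon} \le 2\exp(-c\epsilon^2 m).
\]
To extend this pointwise bound uniformly over the set
\[
\mathcal Y := \dkh{\bm H\in\mathbb{H}^{n\times n}:\ \normf{\bm H}=1,\ \rank(\bm H)\le 2,\ \norm{\bm H}_{0,2}\le k},
\]
I would construct a $\delta$-net of cardinality at most $\exp(Ck\log(\e n/k))$ by first selecting the $k$-element row/column support (at most $\binom{n}{k}$ choices), and then covering the resulting $O(k)$-dimensional manifold of rank-$2$ Hermitian matrices at scale $\delta$ in Frobenius norm via a product of a net on the Stiefel variety of $2$-frames and a net on the unit sphere of eigenvalues. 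A union bound over this net, combined with a continuity estimate $|\tfrac{1}{m}\norm{\mathcal A(\bm H)}_1-\tfrac{1}{m}\norm{\mathcal A(\bm H_0)}_1|\le\tfrac{1}{m}\norm{\mathcal A(\bm H-\bm H_0)}_1$ controlled by the sparse operator-norm bound $\tfrac{1}{m}\sum_j \va_{j,S}\va_{j,S}^*\lesssim 1$ (uniformly over supports $S$ of size $O(k)$, which holds once $m\gtrsim k\log(n/k)$), then delivers the claim on all of $\mathcal Y$ with the stated probability.

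The main obstacle is pinning down the explicit numerical constants $0.12$ and $2.45$: the qualitative form $c\normf{\bm H}\le \tfrac{1}{m}\norm{\mathcal A(\bm H)}_1\le C\normf{\bm H}$ follows straightforwardly from the outline above, but matching the published constants requires tight quantitative analysis of the minimum of $\mu$ on the unit circle, a careful choice of $\delta$ and $\epsilon$ that trades the Bernstein exponent against the net cardinality, and control of the remainder term in the continuity estimate sharp enough that the constant inflation at each step stays below the target thresholds.
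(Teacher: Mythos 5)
The paper does not prove this lemma at all---it imports it verbatim from \cite{pr_2021}---so there is no internal proof to compare against; your outline (spectral decomposition reducing each summand to $\lambda_1 g_{j,1}+\lambda_2 g_{j,2}$ with independent $\mathrm{Exp}(1)$ variables, two-sided bounds on $\E|\lambda_1 g_1+\lambda_2 g_2|$ over the circle $\lambda_1^2+\lambda_2^2=1$, Bernstein's inequality, and a support-selection-plus-Stiefel $\delta$-net closed by the sparse operator-norm continuity estimate) is correct and is essentially the argument of the cited source, and it is the same template the paper itself uses for the analogous Theorem \ref{lem_ripc20} (pointwise expectation bounds as in Lemma \ref{lem_bounds}, Bernstein, then a net of size $\exp(Ck\log(\e n/\delta k))$). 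The only remaining bookkeeping, which you correctly flag, is that the constants $0.12$ and $2.45$ come from the particular choice of $\epsilon$ and $\delta$ traded against the net cardinality; your exact evaluation $\E|a g_1-b g_2|=(a^2+b^2)/(a+b)\ge 1/\sqrt{2}$ actually gives more room than the fourth-moment comparison used in the paper's own Lemma \ref{lem_bounds}, so hitting those constants is not an obstacle.
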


\begin{lem}[\cite{pr_2021, npr_2020}] \label{le:uuvv}
For any vectors $\vu,\vv \in \C^n$ obeying $\nj{\vu,\vv} \ge 0$, we have
\[
\normf{\vu\vu^*-\vv\vv^*} \ge \frac1{\sqrt 2} \norms{\vu}\norms{\vu-\vv}.
\]
\end{lem}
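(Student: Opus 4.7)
The plan is to reduce the claim to an algebraic inequality in three real scalars via the invariants $a := \|\vu\|^2$, $b := \|\vv\|^2$, and $c := \langle \vu, \vv\rangle$; by hypothesis $c$ is a non-negative real, and Cauchy--Schwarz gives $0 \le c \le \sqrt{ab}$. First I would square the target inequality. Using the identity $\tr(\vu\vu^* \vv\vv^*) = |\langle\vu,\vv\rangle|^2 = c^2$, expansion of Frobenius norms yields
\[
    \|\vu\vu^* - \vv\vv^*\|_F^2 = a^2 + b^2 - 2c^2, \qquad \|\vu\|^2\|\vu - \vv\|^2 = a(a + b - 2c),
\]
so the claim reduces to showing that
\[
    P(a,b,c) := a^2 + 2b^2 - ab + 2ac - 4c^2 \ge 0
\]
on the constraint set $\{(a,b,c) : a,b \ge 0,\ 0 \le c \le \sqrt{ab}\}$.

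Next I would view $P$ as a concave (downward-opening) quadratic in $c$ on the interval $[0,\sqrt{ab}]$, so its minimum over this interval is attained at one of the two endpoints. At $c = 0$, one has
\[
    P(a,b,0) = a^2 - ab + 2b^2 = (a - b/2)^2 + 7b^2/4 \ge 0.
\]
At $c = \sqrt{ab}$ (the case $b = 0$ being trivial), the substitution $r := \sqrt{a/b}$ yields
\[
    P(a,b,\sqrt{ab}) = b^2\xkh{r^4 + 2r^3 - 5r^2 + 2} = b^2 (r-1)^2 (r^2 + 4r + 2) \ge 0,
\]
since $r^2 + 4r + 2 > 0$ for $r \ge 0$.

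The main obstacle is the endpoint $c = \sqrt{ab}$: a brute expansion does not visibly present a sum of squares, and the required factorization is guided by the observation that equality in the lemma must hold when $\vu = \vv$ (i.e., $r = 1$), which forces $r = 1$ to be a double root of the quartic $r^4 + 2r^3 - 5r^2 + 2$; dividing out $(r-1)^2$ leaves the manifestly positive factor $r^2 + 4r + 2$. With both endpoints non-negative and $P$ concave in $c$, the algebraic inequality holds on the entire constraint set, and the lemma follows after taking square roots.
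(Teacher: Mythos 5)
The paper does not actually prove this lemma; it is stated in the appendix and attributed to \cite{pr_2021, npr_2020}, so there is no in-paper argument to compare against. Your proof is correct and self-contained. The reduction to the invariants $a=\|\vu\|^2$, $b=\|\vv\|^2$, $c=\langle\vu,\vv\rangle\in[0,\sqrt{ab}]$ is valid: $\tr(\vu\vu^*\vv\vv^*)=|\langle\vu,\vv\rangle|^2$ gives $\|\vu\vu^*-\vv\vv^*\|_F^2=a^2+b^2-2c^2$, and $\|\vu-\vv\|^2=a+b-2c$ since $c$ is real, so the squared claim is exactly $P(a,b,c)=a^2+2b^2-ab+2ac-4c^2\ge 0$. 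The concavity-in-$c$ argument correctly reduces everything to the two endpoints; $P(a,b,0)=a^2-ab+2b^2\ge 0$ is immediate, and I have checked the factorization $r^4+2r^3-5r^2+2=(r-1)^2(r^2+4r+2)$, which handles $c=\sqrt{ab}$ (with $b=0$ trivial, as you note). The one stylistic remark is that the endpoint $c=\sqrt{ab}$ corresponds to $\vu,\vv$ parallel, so a slightly more geometric alternative is to write $\vv=\lambda\vu+\mu\vu^\perp$ and optimize over $\lambda,\mu$; but your purely algebraic route is complete and arguably cleaner, and the observation that $r=1$ must be a double root (forced by the equality case $\vu=\vv$) is exactly the right way to find the factorization.
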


\begin{lem} \label{lem_bjajx}
Suppose that $\va_j \sim \mathcal N(0,I_n), j=1, \dots, m$ are i.i.d.  Gaussian random vectors
and $\bm b\in \mathbb R^m$ is a nonzero vector.
For any fixed $\zeta \in (0, 1)$,
if $m \ge C\zeta^{-2}k(\log(\mathrm en/k)+\log(1/\zeta))$, then with probability at least $1 - 3\exp(-c_0 \zeta^2 m)$ it holds that
\begin{equation*}
\sum_{j=1}^m b_j(\va_j^\T \bm x) \le \zeta\sqrt{m}\norm{\bm x}_2\norm{\bm b}_2
\end{equation*}
for all $k$-sparse vectors $\bm x \in \mathbb R^n$. Here, $c_0 >0$ is a universal constant.
\end{lem}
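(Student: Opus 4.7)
The plan is to linearize the problem by noticing that, for each fixed $\vx$, the quantity $L(\vx) := \sum_{j=1}^m b_j (\va_j^\T \vx)$ is a single Gaussian random variable, and then to upgrade the resulting pointwise bound to a uniform bound over all $k$-sparse unit vectors through a standard covering plus union-bound argument.

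First, I would write $L(\vx) = \vx^\T \bm g$ where $\bm g := \sum_{j=1}^m b_j \va_j \in \R^n$. Since the $\va_j$ are i.i.d.\ $\mathcal N(0, I_n)$ and $\vb$ is deterministic, $\bm g \sim \mathcal N(0, \norm{\vb}_2^2 I_n)$, so for any fixed $\vx$ the Gaussian tail inequality gives $\PP(|L(\vx)| \ge t) \le 2\exp\xkh{-t^2/(2 \norm{\vx}_2^2 \norm{\vb}_2^2)}$. In particular, bounding $L$ uniformly over $\Sigma_k := \dkh{\vx \in \R^n : \norm{\vx}_0 \le k, \, \norm{\vx}_2 = 1}$ is equivalent to bounding $\max_{|T|=k} \norm{\bm g_T}_2$, the $\ell_2$-norm of the largest $k$ coordinates of a single Gaussian vector. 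This reduction is the conceptual heart of the proof and avoids any process-level concentration argument.

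Next, for each support $T \subseteq [n]$ with $|T| = k$, I would take a $(1/2)$-net $\mathcal N_T$ of the unit sphere of $\R^T$ of cardinality at most $6^k$, and set $\mathcal N := \bigcup_{|T|=k} \mathcal N_T$, so that $|\mathcal N| \le \binom{n}{k} 6^k \le \exp\xkh{k\log(\mathrm en/k) + k \log 6}$. A standard same-support approximation shows that $\sup_{\vx \in \Sigma_k} |L(\vx)| \le 2 \max_{\vx_0 \in \mathcal N} |L(\vx_0)|$, because for every $\vx \in \Sigma_k$ one may choose $\vx_0 \in \mathcal N$ with the same support and $\norm{\vx - \vx_0}_2 \le 1/2$, and $\vx - \vx_0$ is then also supported on $T$ with $\ell_2$-norm at most $1/2$. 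Applying the pointwise Gaussian tail at level $t = \zeta\sqrt{m}\norm{\vb}_2/2$ together with a union bound gives
\begin{equation*}
\PP\xkh{\sup_{\vx \in \Sigma_k} |L(\vx)| \ge \zeta \sqrt m \norm{\vb}_2} \le 2 \exp\xkh{k\log(\mathrm en/k) + k\log 6 - \zeta^2 m/8},
\end{equation*}
which is $\le 3\exp(-c_0 \zeta^2 m)$ whenever $m \ge C\zeta^{-2} k \log(\mathrm en/k)$ for suitable absolute constants $c_0, C > 0$.

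The only subtlety I anticipate is the approximation step that converts the additive error $|L(\vx-\vx_0)|$ into a multiplicative factor in front of $\sup_{\vx} |L(\vx)|$; this hinges on keeping $\vx - \vx_0$ supported on the same set $T$ so that no jump from $k$-sparse to $2k$-sparse occurs. The constants are not tight, and the extra $\log(1/\zeta)$ appearing in the stated sample-complexity bound is comfortably absorbed: it would arise naturally if one instead used a $\zeta$-net (of cardinality $(3/\zeta)^k$) in place of the $(1/2)$-net. Beyond these bookkeeping points the result is a direct consequence of scalar Gaussian concentration applied to the single random vector $\bm g = \sum_j b_j \va_j$.
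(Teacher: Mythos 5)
Your proof is correct, but it takes a genuinely different route from the paper's. You observe that $L(\vx)=\sum_j b_j(\va_j^\T\vx)=\vx^\T\bm g$ with $\bm g=\sum_j b_j\va_j\sim\mathcal N(0,\norm{\vb}_2^2 I_n)$, so the entire functional is a linear form in one Gaussian vector; the uniform bound then follows from scalar Gaussian tails, a same-support $(1/2)$-net, and the standard multiplicative absorption $\sup_{\Sigma_k}|L|\le 2\max_{\mathcal N}|L|$ (which is legitimate precisely because $\vx-\vx_0$ stays supported on $T$, as you note). The paper instead treats $\sum_j b_j(\va_j^\T\vx_0)$ as a sum of independent sub-gaussian terms and applies Hoeffding's inequality pointwise, then controls the net-approximation error \emph{additively} via Cauchy--Schwarz, $\bigl|\sum_j b_j\va_j^\T(\vx-\vx_0)\bigr|\le\norm{\vb}_2\bigl(\sum_j|\va_j^\T(\vx-\vx_0)|^2\bigr)^{1/2}$, which forces it to invoke a separate high-probability operator-norm bound on $\sum_j\va_j\va_j^\T$ restricted to $2k$-sparse directions and to use a $\zeta/4$-net --- this is exactly where the extra $\log(1/\zeta)$ in the sample-complexity hypothesis and the third failure event (hence the factor $3$ in $3\exp(-c_0\zeta^2 m)$) come from. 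Your argument is cleaner: it needs only $m\gtrsim\zeta^{-2}k\log(\mathrm en/k)$, a single source of randomness, and no restricted-isometry input; since the stated hypothesis on $m$ is stronger than what you need, the lemma still follows. What the paper's approach buys in exchange is robustness: the Hoeffding-plus-Cauchy--Schwarz argument survives essentially unchanged if the $\va_j$ are only sub-gaussian rather than exactly Gaussian, whereas your reduction leans on knowing the exact distribution of $\bm g$. One cosmetic remark: both proofs actually establish the two-sided bound on $\bigl|\sum_j b_j(\va_j^\T\vx)\bigr|$, which is stronger than the one-sided inequality in the statement, so nothing is lost there.
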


\begin{proof}
Without loss of generality we assume $\norm{\bm x}_2 = 1$.
For any fixed $\bm x_0$, the terms $\va_j^\T \bm x_0$ are
independent, mean zero, sub-gaussian random variables with the maximal sub-gaussian norm being a positive universal constant.
The Hoeffding's inequality implies
\begin{equation*}
\PP\left(|b_j(\va_j^\T \bm x_0)| \ge t\right) \le 2\exp\bigl(-\frac{c_1^2t^2}{\norm{\bm b}_2^2}\bigr).
\end{equation*}
Here, $c_1 > 0$ is a universal constant.
Taking $t = \zeta\sqrt{m}\norm{\bm b}_2/2$, we obtain that
\begin{equation} \label{eq_bjajx}
\Bigl|\sum_{j=1}^m(\va_j^\T \bm x_0)\Bigr| \le \frac{\zeta}{2} \cdot \sqrt m \norm{\bm b}_2
\end{equation}
holds with probability at least $1 - 2\exp(-c_1\zeta^2m/4)$.

Next, we give a uniform bound to \eqref{eq_bjajx} for all $k$-sparse vectors $\bm x$.
Denote
\begin{equation*}
\mathcal S_{n,k} = \{\bm x\in \mathbb R^n: \norm{\bm x}_2=1, \norm{\bm x}_0 \le k\}.
\end{equation*}
We assume that $\mathcal N$ is a $\delta$-net of $\mathcal S_{n,k}$ such that for any $\bm x \in \mathcal S_{n,k}$, there exists a vector $\bm x_0 \in \mathcal N$ such that $\norm{\bm x - \bm x_0}_2 \le \delta$.
The covering number
$|\mathcal N| \le \left( \begin{array}{l} n\\k \end{array} \right) (1+\frac{2}{\delta})^k$. Note that $\norm{\vx-\vx_0} \le 2k$.
Therefore, when $m\gtrsim 2k$, with probability at least $1 - \exp(-c_2 m)$, it holds,
Thus we have
\begin{align*}
\left|\Bigl|\sum_{j=1}^m b_j(\va_j^\T \bm x)\Bigr| - \Bigl|\sum_{j=1}^m b_j(\va_j^\T \bm x_0)\Bigr|\right|
&\le \Bigl|\sum_{j=1}^m b_j \va_j^\T (\bm x-\bm x_0)\Bigr| \\
&\le \norm{\bm b}_2 \sqrt{\sum_{j=1}^m |\va_j^\T (\bm x - \bm x_0)|^2}\\
&\le \norm{\bm b}_2 \sqrt{\Bigl\|\sum_{j=1}^m \va_j\va_j^\T\Bigr\|_2} \cdot \norm{\bm x-\bm x_0}_2 \\
&\le 2\norm{\bm b}_2 \sqrt m \cdot \delta,
\end{align*}
where the second inequality follows from the Cauchy-Schwarz inequality and the last inequality comes from the fact
$\norm{\sum_{j=1}^m \va_j \va_j^\T}_2 \le 4m$ with probability at least $1 - \exp(-c_2 m)$, where $c_2>0$ is a universal constant.
Choosing $\delta = \zeta/4$ and taking the union bound over $\mathcal N$,
we obtain that
\begin{equation*}
\Big|\sum_{j=1}^m b_j(\va_j^\T \bm x_0)\Big| \le \zeta \cdot \sqrt m \norm{\bm b}_2
\end{equation*}
holds with probability at least
\begin{equation*}
1 - 2\exp(-c_1\zeta^2m/4)\cdot \left( \begin{array}{l} n\\k \end{array} \right) \cdot (1+\frac{2}{\delta})^{k} - \exp(-c_2 m) \ge 1-3\exp(-c \zeta^2 m)
\end{equation*}
provided $m \ge C \zeta^{-2} k (\log(\mathrm en/k)+\log(1/\zeta))$. Here, $C$ and $c$ are positive universal constants.
This completes the proof.
\end{proof}

\begin{lem} \label{lem_bounds}
Suppose that $\va \in \C^n$ is a complex Gaussian random vector and $b \in \C$ is a complex number. For any Hermitian matrix $\bm H \in \C^{n\times n}$ with $\rank(\bm H) \le 2$ and any vector $\vh \in \C^n$, we have
\[
\frac 13 \sqrt{ \normf{\bm H}^2+ b^2 \norm{\vh}^2 }  \le \E \abs{\va^* \bm H \va+ 2(  b( \va^*\vh))_{\Re}}  \le 2\sqrt{3 \normf{\bm H}^2+ b^2 \norm{\vh}^2}.
\]
\end{lem}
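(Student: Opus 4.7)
The plan is to denote $X := \va^*\bm H\va$ and $Y := 2(b\va^*\vh)_{\Re}$, set $W := X+Y$, and control $\E|W|$ via its first two moments, which can be computed exactly. Because $\rank(\bm H)\le 2$, I would write $\bm H = \lambda_1\vu_1\vu_1^* + \lambda_2\vu_2\vu_2^*$ with orthonormal eigenvectors and use rotational invariance of the complex Gaussian to rewrite $X = \lambda_1 Z_1 + \lambda_2 Z_2$ with i.i.d.\ $Z_i := |\va^*\vu_i|^2 \sim \mathrm{Exp}(1)$. Then $\E X = \tr\bm H$ and $\E(X-\tr\bm H)^2 = \lambda_1^2+\lambda_2^2 = \normf{\bm H}^2$. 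Meanwhile $Y$ is a centered real Gaussian with $\E Y^2 = 2|b|^2\norm{\vh}^2$. Writing $\va = (\bm{g}_R + \iu\,\bm{g}_I)/\sqrt 2$ with real standard Gaussian $\bm{g}_R,\bm{g}_I$, the summand $X$ is an even and $Y$ an odd polynomial in these coordinates, so $\E[XY]=0$; consequently
\[
\E W^2 \;=\; \normf{\bm H}^2 + (\tr\bm H)^2 + 2|b|^2\norm{\vh}^2.
\]

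For the upper bound, Cauchy--Schwarz yields $\E|W|\le \sqrt{\E W^2}$. Since $\bm H$ has at most two nonzero eigenvalues, $(\tr\bm H)^2 = (\lambda_1+\lambda_2)^2 \le 2\normf{\bm H}^2$, whence $\E W^2 \le 3\normf{\bm H}^2 + 2|b|^2\norm{\vh}^2 \le 4\bigl(3\normf{\bm H}^2 + |b|^2\norm{\vh}^2\bigr)$, which gives the stated upper bound $\E|W|\le 2\sqrt{3\normf{\bm H}^2 + |b|^2\norm{\vh}^2}$.

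For the lower bound I would invoke moment comparison. The variable $W$ is a polynomial of degree at most $2$ in the real Gaussian vector $(\bm{g}_R,\bm{g}_I)$, so Gaussian hypercontractivity (Nelson--Gross) gives $\norm{W}_4 \le 3\norm{W}_2$; combining with H\"older's inequality $\norm{W}_2^2 \le \norm{W}_1^{2/3}\norm{W}_4^{4/3}$ yields $\norm{W}_1 \ge \norm{W}_2/9$. Since $\E W^2 \ge \normf{\bm H}^2 + 2|b|^2\norm{\vh}^2 \ge \normf{\bm H}^2 + |b|^2\norm{\vh}^2$, this already produces a lower bound of the stated form with \emph{some} absolute constant.

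The main obstacle is sharpening the universal constant from the $1/9$ delivered by raw hypercontractivity to the $1/3$ claimed in the lemma. I would perform a two-case analysis. When $|b|^2\norm{\vh}^2$ dominates $\normf{\bm H}^2$, decompose $\vh = P_V\vh + P_{V^\perp}\vh$ along $V := \mathrm{range}(\bm H)$; then $Y$ splits into a part tied to $\va|_V$ and an independent real Gaussian $Y_{V^\perp}$ of variance $2|b|^2\norm{P_{V^\perp}\vh}^2$, and conditioning on $\va|_V$ together with the elementary estimate $\E|s+G|\ge \max(|s|,\sqrt{2/\pi}\,\sigma_G)$ for a centered Gaussian $G$ pushes the constant down. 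When instead $\normf{\bm H}^2$ dominates, I would evaluate $\E|\lambda_1 Z_1 + \lambda_2 Z_2 + \text{centred Gaussian shift}|$ directly from the explicit joint density of two exponentials and a Gaussian. Reconciling the two regimes and optimizing the constant is the only delicate step; the surrounding arithmetic is routine.
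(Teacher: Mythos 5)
Your skeleton is the same as the paper's: spectral decomposition $\bm H=\lambda_1\vu_1\vu_1^*+\lambda_2\vu_2\vu_2^*$, expansion of $\vh$ in the frame $\vu_1,\vu_2,\vu_3$, reduction by unitary invariance to a degree-two polynomial $W$ in six real Gaussians, and the moment comparison
\[
\sqrt{\frac{(\E W^2)^3}{\E W^4}}\;\le\;\E|W|\;\le\;\sqrt{\E W^2}
\]
coming from H\"older. Your upper bound is correct and matches the paper's (your $\E W^2=\normf{\bm H}^2+(\tr\bm H)^2+2|b|^2\norm{\vh}^2$ together with $(\tr\bm H)^2\le 2\normf{\bm H}^2$ even gives the slightly better factor $\sqrt 2$ in place of $2$). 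The gap is in the lower bound: you control $\E W^4$ by generic degree-$2$ Gaussian hypercontractivity, $\norm{W}_4\le 3\norm{W}_2$, which as you yourself note only yields $\E|W|\ge\frac19\sqrt{\normf{\bm H}^2+|b|^2\norm{\vh}^2}$, not the stated $\frac13$. The paper closes this factor of $3$ not by a case analysis but by doing exactly the computation your setup makes available: it evaluates $\E W^2$ and $\E W^4$ explicitly (a tedious polar-coordinate integration over the six Gaussians), obtaining a lower bound of the form $\E W^2\ge c_2\bigl(\normf{\bm H}^2+|b|^2\norm{\vh}^2\bigr)$ and an upper bound $\E W^4\le c_4\bigl(\normf{\bm H}^2+|b|^2\norm{\vh}^2\bigr)^2$ with $c_2^3/c_4=1/9$, which after the square root is precisely $\frac13$. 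In other words, the sharpening you defer to a "delicate step" is, in the paper, just the exact fourth moment of the specific variable $\lambda_1Z_1+\lambda_2Z_2+Y$ ($Z_i$ exponential, $Y$ Gaussian, all the cross terms killed by parity).

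Your proposed two-case repair is not carried out and I would not count on it as written: the regime where $\normf{\bm H}^2$ and $|b|^2\norm{\vh}^2$ are comparable is exactly where neither of your two arguments is sharp, and "reconciling the two regimes" is the whole difficulty, not routine arithmetic. So as a proof of the lemma as stated the proposal is incomplete. That said, the defect is only in the numerical constant: the bound $\frac19\sqrt{\normf{\bm H}^2+|b|^2\norm{\vh}^2}$ that you do prove would still drive Theorem \ref{lem_ripc20} and everything downstream, with $\theta^-/12$ replaced by a smaller absolute constant. If you want the stated $\frac13$, replace the hypercontractivity step by the explicit computation of $\E W^4$ in the coordinates you already introduced.
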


\begin{proof}
Since $\bm H \in \C^{n\times n}$ is a Hermitian matrix with  $\rank(\bm H) \le 2$, we can decompose $\bm H$ into
\begin{equation*}
\bm H= \lambda_1 \bm u_1 \bm u_1^* + \lambda_2 \bm u_2 \bm u_2^*,
\end{equation*}
where $\lambda_1, \lambda_2 \in \mathbb R$ are eigenvalues of $\bm H$ and $\bm u_1, \bm u_2 \in \mathbb C^n$ are the corresponding eigenvectors with $\norm{\bm u_1}_2 = \norm{\bm u_2}_2 = 1, \nj{\vu_1,\vu_2}=0$.
For the vector $\vh\in \C^n$, we can write it in the form of
\[
\vh=\sigma_1 \vu_1 +\sigma_2 \vu_2+\sigma_3 \vu_3,
\]
where $\sigma_1, \sigma_2, \sigma_3 \in \C$, and  $\vu_3 \in \C^n$ satisfying $\nj{\vu_3,\vu_1}=0, \nj{\vu_3,\vu_2}=0$ and $\norm{\vu_3}=1$.
For simplicity, without loss of generality, we assume that $b$ is a real number. Therefore,  we have
\[
\va^* H \va+ 2(b( \va^*\vh))_{\Re} = \lambda_1 |\va^* \vu_1|^2+\lambda_2 |\va^* \vu_2|^2 +2b \xkh{ \sigma_1 \va^*\vu_1+\sigma_2 \va^*\vu_2+\sigma_3 \va^*\vu_3}_{\Re}.
\]
Note that $\va \in \C^n$ is a complex Gaussian random vector and $\vu_1,\, \vu_2,\, \vu_3$ are orthogonal vectors. Thus, we have
\begin{equation*}
\E\abs{\va^* H \va+ 2(b(\va^*\vh))_{\Re}} = \E |\xi|,
\end{equation*}
with $\xi$ being a random variable given by
\[
\xi=\lambda_1 z_1^2+ \lambda_1 z_2^2+ \lambda_2 z_3^2+ \lambda_2 z_4^2+2b \xkh{ \sigma_{1,\Re} z_1 - \sigma_{1,\Im} z_2+ \sigma_{2,\Re} z_3 - \sigma_{2,\Im} z_4+\sigma_{3,\Re} z_5  - \sigma_{3,\Im} z_6}.
\]
Here, $z_1, z_2, z_3, z_4, z_5, z_6 \sim \mathcal N(0, 1/2)$ are independent. By Cauchy-Schwarz inequality, we have
\[
\E |\xi| \le \sqrt{\E \xi^2} \quad \mbox{and} \quad  \E \xi^2 = \E( \xi^{\frac{2}{3}}\xi^{\frac{4}{3}}) \le (\mathbb E\xi )^{\frac{2}{3}}(\mathbb E\xi_j^4)^{\frac{1}{3}}.
\]
It immediately gives
\begin{equation} \label{eq:absxi}
    \sqrt{\frac{(\mathbb E\xi^2)^3}{\mathbb E\xi^4}} \le \E|\xi| \le \sqrt{\E \xi^2}
\end{equation}
Let $z_1 = \rho_1\cos\theta$, $z_2 = \rho_1\sin\theta$, $z_3 = \rho_2\cos\phi$ and $z_4 = \rho_2\sin\phi$, $z_5 = \rho_3\cos\gamma$ and $z_6 = \rho_3\sin\gamma$.
Through some tedious calculations, we have
\begin{eqnarray*}
\E \xi^2 &= & \Bigl(\frac{1}{2\pi}\Bigr)^3 \int_0^{2\pi} \int_0^{2\pi}\int_0^\infty \int_0^{2\pi}\int_0^\infty\int_0^\infty
\rho_1\rho_2 \rho_3 \left( \lambda_1 \rho_1^2 +  \lambda_2 \rho_2^2 + 2b(\sigma_{1,\Re}  \rho_1\cos\theta - \sigma_{1,\Im}  \rho_1\sin\theta \right. \\
& & \left. + \sigma_{2,\Re} \rho_2\cos\phi - \sigma_{2,\Re}  \rho_2\sin\phi+ \sigma_{3,\Re}  \rho_3\cos\gamma - \sigma_{3,\Im}  \rho_3 \sin\gamma)\right)^2 \e^{-\frac{\rho_1^2+\rho_2^2+\rho_3^2}{2}}\mathrm d\rho_1\mathrm  d\rho_2\mathrm d\rho_3\mathrm d\theta\mathrm d\phi\mathrm d\gamma \\
&=&  8(\lambda_1^2 + \lambda_1\lambda_2 + \lambda_2^2) + 4b^2(\sigma_1^2+\sigma_2^2+\sigma_3^2)  \nonumber \\
&\le & 12 \normf{\bm H}^2+ 4b^2 \norm{\vh}^2,
\end{eqnarray*}
where the last inequality follows from the fact that $\lambda_1^2 + \lambda_2^2 = \norm{\bm H}_F^2$ and $\sigma_1^2+\sigma_2^2+\sigma_3^2=\norm{\vh}^2$.
Similarly, we could obtain
\begin{equation} \label{eq_bounds_lower_xi2}
\mathbb E\xi_j^2  \ge 4\norm{\bm H}_F^2 + 4b^2\norm{\vh}^2.
\end{equation}
and
\begin{equation} \label{eq:xi4}
\begin{aligned}
\E \xi^4 &=  48(8(\lambda_1^4 + \lambda_1^3\lambda_2 + \lambda_1^2\lambda_2^2 + \lambda_1\lambda_2^3 + \lambda_2^4)
+ b^4(\sigma_1^2+\sigma_2^2+\sigma_3^2 )^2 \\
& \quad + 4b^2(\lambda_1 + \lambda_2)^2 (\sigma_1^2+\sigma_2^2+\sigma_3^2)  + 8b^2 (\lambda_1^2\sigma_1^2+ \lambda_2^2\sigma_2^2) ) \\
& \le 48(12\norm{\bm H}_F^4 + b^4\norm{\vh}_2^4 + 16 b^2\norm{\bm H}_F^2\norm{\vh}_2^2) \\
& \le 576 \xkh{\normf{\bm H}^2+b^2\norm{\vh}^2}^2,
\end{aligned}
\end{equation}
where the first inequality follows from  the fact that
\[
\lambda_1^4 + \lambda_1^3\lambda_2 + \lambda_1^2\lambda_2^2 + \lambda_1\lambda_2^3 + \lambda_2^4
\le \lambda_1^4 + \lambda_1^2\lambda_2^2 + \lambda_2^4 + \frac 12 \xkh{\lambda_1^2+ \lambda_2^2}^2 \le \frac 23 \normf{\bm H}^4
\]
and
\[
\lambda_1^2\sigma_1^2+ \lambda_2^2\sigma_2^2
\le \xkh{\lambda_1^2+\lambda_2^2} \xkh{\sigma_1^2+\sigma_2^2+\sigma_3^2} \le \normf{\bm H}^2\norm{\vh}^2.
\]
Putting \eqref{eq_bounds_lower_xi2} and \eqref{eq:xi4}  into \eqref{eq:absxi}, we obtain
\[
\E|\xi| \ge \frac 13 \sqrt{ \normf{\bm H}^2+ b^2 \norm{\vh}^2 }.
\]

Therefore, we have
\[
\frac 13 \sqrt{ \normf{\bm H}^2+ b^2 \norm{\vh}^2 }  \le \E|\xi| \le 2\sqrt{3 \normf{\bm H}^2+ b^2 \norm{\vh}^2}.
\]
This completes the proof.
\end{proof}

\bibliographystyle{plain}

\begin{thebibliography}{10}
\bibitem{balan2006signal}
R. Balan, P. Casazza, and D. Edidin,
\newblock ``On signal reconstruction without phase,''
\newblock {\em  Appl. Comput. Harmon. Anal.},  vol. 20, no. 3, pp. 345--356, 2006.



\bibitem{barmherzig}
D. A. Barmherzig, J. Sun, and P. N. Li, T. J. Lane, E.J. Cand\`es,
\newblock ``Holographic phase retrieval and reference design,''
\newblock {\em  Inverse Probl.},  vol. 35, no. 9, 094001, 2019.


\bibitem{beinert2015}
R. Beinert  and G. Plonka,
\newblock ``Ambiguities in one-dimensional discrete phase retrieval from Fourier magnitudes,''
\newblock {\em J. Fourier Anal. Appl.}, vol. 21, no. 6, pp. 1169--1198, 2015.


\bibitem{beinert2018}
R. Beinert  and G. Plonka,
\newblock ``Enforcing uniqueness in one-dimensional phase retrieval by additional signal information in time domain,''
\newblock {\em Appl. Comput. Harmon. Anal.}, vol. 45, no.3,  pp. 505--525, 2018.


\bibitem{bendory}
T. Bendory, R. Beinert, and Y. C. Eldar,
\newblock ``Fourier phase retrieval: Uniqueness and algorithms,''
\newblock {\em Compressed Sensing and its Applications}, pp. 55--91, 2017.


%
\bibitem{saveing4d4}
 A. Bandeira, J. Cahill, D. Mixon, and A. Nelson,
\newblock ``Saving phase: Injectivity and stability for phase retrieval,"
\newblock {\em Applied and Computational Harmonic Analysis,} 37(1):106--125, 2014

\bibitem{2020a}
J. Cai, M. Huang, D. Li and Y. Wang,
\newblock `` Solving phase retrieval with random initial guess is nearly as good as by spectral
 initialization,"
\newblock {\em  Appl. Comput. Harmon. Anal.},  vol. 58,  pp. 60--84, 2022.


\bibitem{cai2013sparse}
T. T.  Cai and A. Zhang,
\newblock ``Sparse representation of a polytope and recovery of sparse signals and low-rank matrices,''
\newblock {\em IEEE Trans. Inf. Theory}, vol. 60, no. 1, pp. 122--132, 2013.


\bibitem{WF}
E. J.  Cand\`es, X. Li, and M. Soltanolkotabi,
\newblock ``Phase retrieval via Wirtinger flow: Theory and algorithms,''
\newblock {\em IEEE Trans. Inf. Theory}, vol. 61, no. 4, pp. 1985--2007, 2015.



\bibitem{TWF}
Y. Chen and  E. J. Cand\`es,
\newblock ``Solving random quadratic systems of equations is nearly as easy as
  solving linear systems,''
\newblock {\em Commun. Pure Appl. Math.}, vol. 70, no. 5, pp. 822--883, 2017.

\bibitem{edidin2019}
D. Edidin,
\newblock ``The geometry of ambiguity in one-dimensional phase retrieval,''
\newblock {\em  SIAM J. Appl. Algebr. Geom.},  vol. 3, no. 4, pp. 644--660, 2019.


\bibitem{eldar2014phase}
Y. C. Eldar and S. Mendelson,
\newblock `` Phase retrieval: Stability and recovery guarantees,"
\newblock {\em  Appl. Comput. Harmon. Anal.},  vol. 36, no.3,  pp. 473--494, 2014.


\bibitem{conca2015algebraic}
A. Conca, D. Edidin, M. Hering, and C. Vinzant,
\newblock ``An algebraic characterization of injectivity in phase retrieval,''
\newblock {\em  Appl. Comput. Harmon. Anal.},  vol. 38, no. 2, pp. 346--356, 2015.

\bibitem{fpr_1978}
J.~R. Fienup,
\newblock ``Reconstruction of an object from the modulus of its {Fourier}
  transform,''
\newblock {\em Optics Letters}, vol. 3, no. 1, pp. 27--29, 1978.

\bibitem{pra_1982}
J.~R. Fienup,
\newblock ``Phase retrieval algorithms: a comparison,''
\newblock {\em Applied Optics}, vol. 21, no. 15, pp. 2758--2769, 1982.

\bibitem{microscopy_1948}
D.~Gabor,
\newblock ``A new microscopic principle,''
\newblock {\em Nature}, vol. 161, no, 4098, pp. 777--778, 1948.
%
\bibitem{microscopy_1949}
D.~Gabor,
\newblock ``Microscopy by reconstructed wave-fronts,''
\newblock {\em Proceedings of the Royal Society of London. Series A.
  Mathematical and Physical Sciences}, vol. 197, no. 1051, pp. 454--487, 1949.


\bibitem{gao2016stable}
B. Gao, Y. Wang, and Z. Xu,
\newblock ``Stable signal recovery from phaseless measurements,''
\newblock {\em J. Fourier Anal. Appl.}, vol. 22, no. 4, pp. 787--808, 2016.




\bibitem{apr_2018}
B. Gao, Q. Sun, Y. Wang, and Z. Xu,
\newblock ``Phase retrieval from the magnitudes of affine linear measurements,''
\newblock {\em Advances in Applied Mathematics}, vol. 93, pp. 121--141, 2018.


\bibitem{guizar}
M. Guizar-Sicairos and J. R. Fienup,
\newblock ``Holography with extended reference by autocorrelation linear differential operation,''
\newblock {\em Opt. Express},  vol. 15, no. 26, pp. 17592--17612, 2007.



\bibitem{ppc_1993}
R.~W. Harrison,
\newblock ``Phase problem in crystallography,''
\newblock {\em Journal of the Optical Society of America A}, vol. 10, no. 5, 1993.

\bibitem{phase_1991}
H~A. Hauptman,
\newblock ``The phase problem of {X}-ray crystallography,''
\newblock {\em Reports on Progress in Physics}, vol. 54, no. 11, pp. 1427--1454, 1991.

%
%


\bibitem{huangK2016}
K. Huang, Y. C. Eldar, and N. D. Sidiropoulos,
\newblock ``Phase retrieval from 1D Fourier measurements: Convexity, uniqueness, and algorithms,''
\newblock {\em IEEE Trans. Signal Process.}, vol. 64, no. 23, pp.  6105--6117, 2016.


\bibitem{npr_2020}
M. Huang and Z. Xu,
\newblock ``Performance bound of the intensity-based model for noisy phase retrieval,''
\newblock arXiv preprint: {arXiv:2004.08764},  2020.

\bibitem{hx_2022}
M. Huang and Z. Xu,
\newblock ``Strong convexity of affine phase retrieval,''
\newblock arXiv preprint: {arXiv:2204.09412}, 2022.


\bibitem{latychevskaia}
T. Latychevskaia,
\newblock ``Iterative phase retrieval for digital holography: tutorial,''
\newblock {\em JOSA A}, vol. 36, no.12,  pp. 31--40, 2019.


\bibitem{liebling2003local}
M. Liebling, T. Blu, E. Cuche, P. Marquet, C. Depeursinge, and M. Unser,
\newblock ``Local amplitude and phase retrieval method for digital holography
  applied to microscopy,''
\newblock In {\em European Conference on Biomedical Optics}, vol. 5143, pp.210--214, 2003.

\bibitem{prco_1990}
R.~P. Millane.
\newblock ``Phase retrieval in crystallography and optics,''
\newblock {\em Journal of the Optical Society of Amerca A}, vol. 7, no .3, pp.394--411,
  March 1990.


\bibitem{AltMin}
P. Netrapalli, P. Jain,  and S. Sanghavi,
\newblock ``Phase retrieval using alternating minimization,''
\newblock {\em IEEE Trans. Signal Process.},  vol. 63, no. 18, pp.4814--4826,
  2015.


\bibitem{oss_2013}
J.~A. Rodriguez, R. Xu, C. Chen, Y. Zou, and J. Miao,
\newblock ``Oversampling smoothness: an effective algorithm for phase retrieval
  of noisy diffraction intensities,''
\newblock {\em Journal of Applied Crystallography}, vol. 46, no. 2, pp. 312--318, 2013.

\bibitem{fpr_1985}
J. L.~C. Sanz,
\newblock ``Mathematical considerations for the problem of Fourier transform
  phase retrieval from magnitude,''
\newblock {\em SIAM J. Appl. Math.},  vol. 45, no.4, pp. 651--664, 1985.


\bibitem{proi_2015}
Y. Shechtman, Y. C. Eldar, O. Cohen, H. N. Chapman, J. Miao, and M. Segev,
\newblock ``Phase retrieval with application to optical imaging: a contemporary
  overview,''
\newblock {\em IEEE Signal Process. Mag.}, vol. 32, no. 3, pp. 87--109, 2015.

\bibitem{turstregion}
J. Sun, Q. Qu, and J. Wright,
\newblock ``A geometric analysis of phase retrieval,''
\newblock {\em Found. Comput. Math.}, vol. 18, no. 5, pp.  1131--1198, 2018.


\bibitem{srip_2016}
V. Voroninski and Z. Xu,
\newblock ``A strong restricted isometry property, with an application to
  phaseless compressed sensing,''
\newblock {\em Appl. Comput. Harmon. Anal.},  vol. 40, no. 2, pp. 386--395,  2016.

\bibitem{pro_1963}
A. Walther,
\newblock ``The question of phase retrieval in optics,''
\newblock {\em J. Mod. Opt.}, vol. 10, no. 1, pp. 41--49, 1963.

\bibitem{pr_2021}
Y.~Xia and Z. Xu,
\newblock ``The recovery of complex sparse signals from few phaseless
  measurements,''
\newblock {\em Appl. Comput. Harmon. Anal.}, vol. 50, 2021.


\bibitem{TAF}
G. Wang, G. B. Giannakis, and Y.~C. Eldar,
\newblock ``Solving systems of random quadratic equations via truncated amplitude
  flow,''
\newblock {\em IEEE Trans. Inf. Theory},  vol. 64, no. 2, pp. 773--794, 2018.


\bibitem{wangxu}
Y. Wang and Z. Xu,
\newblock ``Generalized phase retrieval : measurement number, matrix recovery and beyond,''
\newblock {\em  Appl. Comput. Harmon. Anal.},  vol. 47, no. 2, pp. 423--446, 2019.

\bibitem{wangxu2}
Y. Wang and Z. Xu,
\newblock `` Phase Retrieval for Sparse Signals,''
\newblock {\em  Appl. Comput. Harmon. Anal.},  vol. 37, no. 3, pp. 531--544, 2014.


\bibitem{xu2013}
G. Xu and Z. Xu,
\newblock `` On the $\ell_1$-Norm Invariant Convex $k$-Sparse Decomposition of Signals,''
\newblock {\em  Journal of the Operations Research Society of China},  vol. 1, no. 4, pp. 537--541, 2013.


\bibitem{RWF}
H. Zhang, Y. Zhou, Y. Liang, and Y. Chi,
\newblock ``A nonconvex approach for phase retrieval: Reshaped wirtinger flow and incremental algorithms,''
\newblock {\em The Journal of Machine Learning Research}, vol. 18, no. 1, pp. 5164--5198, 2017.


\end{thebibliography}

\end{document}